\newcommand{\eit}{\end{itemize}}
\newcommand{\ben}{\begin{enumerate}}
\newcommand{\een}{\end{enumerate}}
\newcommand{\bdesc}{\begin{description}}
\newcommand{\edesc}{\end{description}}
\newcommand{\bea}{\begin{array}}
\newcommand{\eea}{\end{array}}
\newcommand{\beqa}{\begin{eqnarray}}
\newcommand{\eeqa}{\end{eqnarray}}
\newcommand{\ds}{\displaystyle}
\newcommand{\Comment}[1]{}
\newtheorem{prop}{Proposition}
\def\N{{\mathds N}}
\def\R{{\mathds R}}
\def\C{{\mathds C}}
\newcommand{\be}{\begin{equation}}
\newcommand{\ee}{\end{equation}}
\newcommand{\bzero}{{\mbox{\boldmath $0$}}}
\newcommand{\bff}{{\mbox{\boldmath $f$}}}
\newcommand{\bn}{{\mbox{\boldmath $n$}}}
\newcommand{\bm}{{\mbox{\boldmath $m$}}}
\newcommand{\bor}{{\mbox{\boldmath $r$}}}
\newcommand{\bs}{{\mbox{\boldmath $s$}}}
\newcommand{\bv}{{\mbox{\boldmath $v$}}}
\newcommand{\bx}{{\mbox{\boldmath $x$}}}
\newcommand{\bz}{{\mbox{\boldmath $z$}}}
\newcommand{\bA}{{\mbox{\boldmath $A$}}}
\newcommand{\bF}{{\mbox{\boldmath $F$}}}
\newcommand{\bI}{{\mbox{\boldmath $I$}}}
\newcommand{\bM}{{\mbox{\boldmath $M$}}}
\newcommand{\bS}{{\mbox{\boldmath $S$}}}
\newcommand{\bV}{{\mbox{\boldmath $V$}}}
\newcommand{\bZ}{{\mbox{\boldmath $Z$}}}
\newcommand{\balpha}{{\mbox{\boldmath $\alpha$}}}
\newcommand{\btheta}{{\mbox{\boldmath $\theta$}}}
\newcommand{\tRAO}{t_{\mbox{\tiny RAO}}}
\newcommand{\tTSGLRT}{t_{\mbox{\tiny TS-GLRT}}}
\newcommand{\Halpha}{\widehat{\alpha}}
\newcommand{\dmax}{\begin{displaystyle}\max\end{displaystyle}}
\newcommand{\dmin}{\begin{displaystyle}\min\end{displaystyle}}
\newcommand{\test}{\mbox{$
\begin{array}{c}
\stackrel{ \stackrel{\textstyle H_1}{\textstyle >} }{
\stackrel{\textstyle <}{\textstyle H_0} }
\end{array}
$}}
\title{Adaptive Detection of Point-like Targets in Spectrally Symmetric Interference}
\author{A. De Maio, \IEEEmembership{Fellow, IEEE}, D. Orlando, \IEEEmembership{Senior Member, IEEE}, C. Hao, \IEEEmembership{Member, IEEE}, 
and G. Foglia, \IEEEmembership{Member, IEEE,}
\thanks{A. De Maio is with the Dipartimento di Ingegneria Elettrica e delle Tecnologie dell'Informazione, Universit\`a degli Studi di Napoli ``Federico II'',
via Claudio 21, I-80125 Napoli, Italy. E-mail: {\tt ademaio@unina.it}.}
\thanks{D. Orlando is with Engineering Faculty of Universit\`a degli Studi ``Niccol\`o Cusano'', via Don Carlo Gnocchi 3, 00166 Roma, Italy. 
E-mail: {\tt danilo.orlando@unicusano.it}.}
\thanks{C. Hao is with the State Key Laboratory of Information Technology for Autonomous Underwater vehicles, 
Chinese Academy of  Sciences, 100190 Beijing, China. E-mail: {\tt haochengp@mail.ioa.ac.cn}.}
\thanks{G. Foglia is with ELETTRONICA S.p.A., Via Tiburtina Valeria Km 13,700, 00131 Roma, Italy. 
E-mail: {\tt goffredo.foglia@gmail.com}.}
\thanks{This work was supported in part by the National Natural Science Foundation of China under Grant Nos. 61172166 and 61571434.}
}
\begin{document}

\maketitle

\begin{abstract}
We address adaptive radar detection of targets embedded in ground clutter dominated environments
characterized by a symmetrically structured power spectral density. 
At the design stage, we leverage on the spectrum symmetry for the interference to
come up with decision schemes capable of capitalizing the a-priori information on the covariance structure. 
To this end, we prove that the detection problem at hand can be formulated in terms of real variables
and, then, we apply design procedures relying on the GLRT, the Rao test, and the Wald test.
Specifically, the estimates of the unknown parameters under the target presence hypothesis 
are obtained through an iterative optimization algorithm whose convergence and quality guarantee is thoroughly proved. 
The performance analysis, both on simulated and on real radar data, confirms the superiority of the considered architectures 
over their conventional counterparts which do not take advantage of the clutter spectral symmetry.
\end{abstract}

\begin{IEEEkeywords}
Adaptive Radar Detection, Constant False Alarm Rate, Generalized Likelihood Ratio Test, Recursive Estimation, Symmetric Spectra.  
\end{IEEEkeywords}

\section{Introduction}

\IEEEPARstart{I}{n the last years}, radar community undertook different routes towards the design of adaptive detection schemes.
The most common design criteria as the Generalized Likelihood Ratio Test (GLRT), the Rao test, and the Wald test have been
exploited in conjunction with specific conditions on the interference affecting the target echoes usually arising in 
some  operating scenarios.

The seminal approach by Kelly et al. \cite{Kelly86,Kelly-Nitzberg,Kelly-TR} did not foresee any
additional assumption on the spectral properties of the interference except for
the circular symmetry. The authors suppose that a set of secondary data, free of signal components and sharing the same
spectral properties of the data under test (primary data), is available to estimate the interference covariance matrix. 
However, this scenario (also known as homogeneous scenario) dictates a constraint on the the number, $K$ say, of secondary data.
More precisely, conventional decision schemes require the inversion of 
the sample covariance matrix. To this end, it is required that 
$K$ has to be greater than or equal to the dimension of the data vectors, $N$ say. Additionally, 
detection performances are strongly affected by 
the estimation quality of the interference covariance matrix \cite{Kelly86}, which relates to $K$.
A lower bound on $K$, which ensures good detection performances, is $2N$, i.e., $K\geq 2N$.
When this condition is not fulfilled due, for instance, to heterogeneity 
between primary and secondary data, severe performance degradations are experienced 
\cite{Melvin-2000,gini1,gini2}.
As a matter of fact, secondary data are often contaminated by power variations over range, clutter discretes, and other outliers, which
drastically reduce number of homogeneous secondary data.
Adaptive detection of signals buried in interference environments for which the secondary data volume is not large is 
referred to as {\em sample-starved problem} \cite{Yuri00,Yuri01}.

Strategies conceived to cope with such situations exhibit a common denominator that consists in incorporating the available {\em a priori}
information into the detector design (knowledge-aided paradigm). For instance, in \cite{DeMaioFoglia00}, the authors show that significant
performance improvements can be achieved exploiting the available information about the surrounding environment.
In particular, they propose algorithms which use the information provided by a geographic information system
in order to properly select secondary data. Another example is provided in \cite{DeMaioFoglia01}, where
the Bayesian approach is employed assuming that the unknown covariance matrix of the interference
obeys a suitable distribution. Under this hypothesis, two GLRT-based detectors are derived and the performance analysis
on real data reveals the superiority of the proposed detectors with respect to their non-Bayesian counterparts when the training set is small.
The Bayesian framework can be also used together with the structural information on the interference 
covariance matrix \cite{Aubry} as shown in \cite{Hongbin}, where the disturbance is modeled as a multi-channel auto-regressive process
with a random cross-channel covariance matrix (see also \cite{Hongbin2,Wiesel}). 

In radar applications, where systems are equipped with array of sensors, 
structural information about the interference covariance matrix arises from the exploitation 
of specific class of geometries.
For instance, in the case of a symmetrically spaced linear array or a system transmitting symmetrically spaced pulse trains, 
collected data could be statistically symmetric in forward and reverse directions.
This results into an interference covariance matrix which shares a so-called ``doubly'' symmetric form, i.e., Hermitian about its principal 
diagonal and persymmetric about its cross diagonal \cite{Nitzberg-1980}.
The mentioned special structure is also induced for the steering vector and allows to achieve interesting processing gains \cite{VanTrees4}.
It is important to highlight that the persymmetric structure is not limited to linear 
arrays but it can be found in different geometries such as standard rectangular arrays, 
uniform cylindrical arrays (with an even number of elements), 
and some standard exagonal arrays \cite{VanTrees4}.
Several approaches, relying on the  persymmetry, have been developed to achieve improved 
detection performances in training-limited scenarios; just to give some examples, see \cite{Cai1992,hongbinPersymmetric,Hongbin1,Pascal,Pascal2,Casillo-2007,PersymmetricRao,GuerciPersymmetry}.

Another source of a priori information, which can be exploited in the design of adaptive algorithms, 
is the possible symmetry  in the clutter spectral characteristics. In fact, it is well-known that  {\em ground clutter}
observed by a stationary monostatic radar often exhibits a symmetric Power Spectral Density (PSD) centered around the zero-Doppler frequency and  whose integral (clutter power) depends on the type of illuminated background \cite{Klemm}. This property has been corroborated by diverse statistical analyses on experimentally 
measured data \cite{Billingsley00,Billingsley01} and implies that clutter autocorrelation function is real-valued and even.
It represents an important structure which reduces the number of  nuisance parameters to estimate and can be exploited at the design stage.  
Specifically, collected data are organized into vectors which, from a statistical perspective, 
are modeled in terms of circularly symmetric complex Gaussian vectors. Now, if the clutter autocorrelation function is real, then
the resulting covariance matrix is real. Each complex vector is thus statistically equivalent to a pair
of independent real Gaussian vectors and the original detection problem can be transferred from the complex domain to the real domain. As a result the number of secondary data is increased by a factor $2$.

Following the above guideline, we focus on ground clutter dominated environments and design 
four adaptive decision schemes which leverage on the symmetric PSD structure for the interference. 
We first transform the problem from the complex domain to the real domain and then solve the new hypothesis test resorting to
design procedures relying on the GLRT, the Rao test, and the Wald test.
It is worth observing that the mathematical derivation of the plain GLRT and the Wald test
for the problem at hand is a formidable task (at least to the best of authors' knowledge). 
For this reason, we exploit {\em ad-hoc} suboptimum procedures (but with a quality guarantee), 
which are suitable modifications of previous criteria, to devise
four adaptive decision schemes.
More precisely, the first is obtained by means of the well-known two-step GLRT-based design procedure \cite{Kelly-Nitzberg}, whereas
the second, which is asymptotically equivalent to the plain GLRT, is devised according to the following rationale
\begin{enumerate}
\item the plain GLRT is evaluated assuming that target amplitudes are perfectly known;
\item target amplitudes are replaced by suitable estimates provided by a newly 
proposed iterative estimation algorithm exploiting alternating (cyclic) optimization and sharing quality guarantee.
\end{enumerate}
The last two architectures are devised employing the Rao test design criterion and an ad-hoc modification of the Wald test which
exploits the amplitude estimates provided by the aforementioned iterative algorithm.
The performance analysis confirms the superiority of the considered architectures 
over their conventional counterparts which do not capitalize on the real and even PSD of the clutter.

The remainder of this paper is organized as follows. Section II addresses the problem formulation while 
Section III deals with the design of the detectors. Section IV provides illustrative examples. 
Some concluding remarks and hints for future work are given in Section V. 
Finally, the appendices contain analytical derivations of the results presented in the previous sections.

\subsection{Notation}
In the sequel, vectors and matrices are denoted by boldface lower-case and upper-case letters, respectively. 
Symbols $\det(\cdot)$ and $\mbox{Tr}(\cdot)$
denote the determinant and the trace of a square matrix, respectively. If $A$ and $B$ are scalars, then $A\times B$ is the usual product of scalars;
on the other hand, if $A$ and $B$ are generic sets, $A \times B$ denotes the Cartesian product of sets.
The imaginary unit is $j$, i.e., $\sqrt{-1}=j$.
The $(i,j)$-entry of a generic matrix $\bA$ is denoted by $\{\bA\}_{i,j}$.
Symbol $\bI_N$ represents
the $(N\times N)$-dimensional identity matrix, while $\bzero$ is the null vector or matrix of proper dimensions. The Euclidean norm of a vector is denoted
by $\|\cdot\|$. As to the numerical sets, $\R$ is the set of real numbers,
$\R^{N\times M}$ is the set of $(N\times M)$-dimensional real matrices (or vectors if $M=1$),
$\C$ is the set of complex numbers, and $\C^{N\times M}$ is the set of $(N\times M)$-dimensional complex matrices (or vectors if $M=1$).
Symbol $\bS_{++}^{N}$ is used to represent the set of $N\times N$ positive definite symmetric matrices.
The real and imaginary parts of a complex vector or scalar are denoted by $\Re(\cdot)$ and $\Im(\cdot)$, respectively.
Symbols $(\cdot)^*$, $(\cdot)^T$, and $(\cdot)^\dag$ stand for complex conjugate, transpose, and conjugate transpose, respectively.
The acronym iid means independent and identically distributed while the symbol $E[\cdot]$ denotes statistical expectation.
Finally, $A \propto B$ means that $A$ is proportional to $B$.

\section{Problem Formulation}
\label{Sec:ProblemFormulation}
In this section, we introduce the detection problem at hand and show that, under the assumption of a symmetric spectrum for the interference,
it is equivalent to another decision problem dealing with real vectors and matrices. To this end, let us begin by formulating
the initial problem in terms of a binary hypothesis test.
Specifically, we assume that the considered sensing systems acquires data from $N\geq 2$ channels which can be spatial and/or temporal.
The echoes from the cell under test are properly pre-processed, namely, the received signals are downconverted to baseband or
an intermediate frequency; then, they are sampled and organized to form a $N$-dimensional vector, $\bor$ say.
We want to test whether or not $\bor$ contains useful target echoes assuming the presence of $K\geq N/2$ secondary data.
Summarizing, we can write this decision problem as follows
\be
\left\{
\begin{array}{lll}
H_0: \bor = \bn, & \bor_k = \bn_k, & k=1, \ldots, K,
\\
H_1: \bor = \alpha \bv +  \bn, & \bor_k = \bn_k, & k=1, \ldots, K,
\end{array}
\right.
\label{eqn:hypothesistest00}
\ee
where
\begin{itemize}
\item $\bv=\bv_1+j\bv_2 \in\C^{N\times 1}$ with $\|\bv\|=1$, $\bv_1=\Re\{\bv\}$, and $\bv_2=\Im\{\bv\}$ is the nominal steering vector;
\item $\alpha=\alpha_1+j\alpha_2\in\C$ with $\alpha_1=\Re\{\alpha\}$ and $\alpha_2=\Im\{\alpha\}$ represents the target response
which is modeled in terms of an unknown deterministic factor accounting for target reflectivity and channel propagation effects;
\item $\bn=\bn_1+j\bn_2\in \C^{N \times 1}$ and $\bn_k=\bn_{1k}+j\bn_{2k} \in \C^{N \times 1}$,  $k=1, \ldots, K$, with $\bn_1=\Re\{\bn\}$, $\bn_2=\Im\{\bn\}$,
$\bn_{1k}=\Re\{\bn_k\}$, and $\bn_{2k}=\Im\{\bn_k\}$, are iid circular complex normal random vectors with zero mean
and unknown positive definite covariance matrix $\bM_0\in\bS_{++}^{N}$; it is important to observe here that, since the interference 
shares zero mean and exhibits a PSD symmetric with respect to the zero 
frequency\footnote{Observe that if the PSD of the clutter is a real and even function, then, 
due to the time/frequency duality of Fourier Transform and to the Wiener-Khintchine Theorem, the autocorrelation function
of the clutter is real and even. As a consequence, the clutter covariance matrix belongs to $\bS_{++}^N$.}
(i.e., the PSD is an even function), the covariance of the interfering signals is real and even.
\end{itemize}
Now, recall that a zero-mean complex Gaussian vector $\bx=\bx_1+j\bx_2\in\C^{N \times 1}$, $\bx_1=\Re\{\bx\}$ and 
$\bx_2=\Im\{\bx\}$, is said to be circular complex normal \cite{BOR-Morgan} if $E[\bx_1\bx_1^T]=E[\bx_2\bx_2^T]$,
$E[\bx_1\bx_2^T]=-E[\bx_2\bx_1^T]$. Under the above assumption, the covariance matrix of $\bx$ can be written as
\be
E[\bx\bx^\dag]=2(E[\bx_1\bx_1^T]-jE[\bx_1\bx_2^T])\in\C^{N\times N}.
\ee
In (\ref{eqn:hypothesistest00}), we have modeled the disturbance in terms of circular complex normal random vectors with zero mean and
real covariance matrix, which, in turn, implies that the cross-covariances between the real and imaginary parts of $\bn$ and $\bn_k$,
$k=1,\dots,K$, are zero. Thus, we can claim that $\bn_1$, $\bn_2$ and $\bn_{1k}$, $\bn_{2k}$, $k=1,\dots,K$, are iid real Gaussian vectors
with zero mean and covariance matrix $\bM=\frac{1}{2}\bM_0\in\R^{N\times N}$.
As a consequence, we can recast problem (\ref{eqn:hypothesistest00}) into the equivalent form
\be
\left\{
\begin{array}{ll}
H_0:\left\{\begin{array}{ll}
\bz_1 = \bn_1, \ \bz_2 = \bn_2,\phantom{+\alpha \bs} &  \\
\bz_{1k} = \bn_{1k}, \ \bz_{2k} = \bn_{2k}, & \  \ \ k=1, \ldots, K,
\end{array}\right.
\\
\\
H_1:\left\{\begin{array}{ll}
\bz_1 = (\alpha_1 \bv_1-\alpha_2\bv_2) +  \bn_1, & \\
\bz_2 = (\alpha_1 \bv_2 + \alpha_2\bv_1) +  \bn_2, &  \\
\bz_{1k} = \bn_{1k}, \ \bz_{2k} = \bn_{2k}, &  k=1, \ldots, K.
\end{array}\right.
\end{array}
\right.
\label{eqn:hypothesistest01}
\ee
The above problem is formally equivalent to (\ref{eqn:hypothesistest00}). As a matter of fact, for the latter problem, 
the relevant parameter to decide for the presence of a target is $\alpha$, or, equivalently, the pair $(\alpha_1,\alpha_2)$. After transformation leading to (\ref{eqn:hypothesistest01}), the formal structure of the decision problem is again
$H_0: \ (\alpha_1,\alpha_2)=(0,0)$, $H_1: \ (\alpha_1,\alpha_2)\neq (0,0)$.

In the next section, we focus on problem (\ref{eqn:hypothesistest01}) and devise adaptive decision schemes based upon the
GLRT, the Rao, and the Wald test design criteria.

\section{Detector Designs}
In this section, four different decision rules are proposed. The first two rely on suitable modifications of the GLRT design criterion.
In particular, we consider the so-called two-step GLRT which consists in evaluating the GLRT of the cell under test assuming that
$\bM$ is known and then replacing it with a proper estimate. On the other hand, the second architecture is conceived exploiting a recursive
estimation strategy of the target response within the GLRT framework (this point is better explained in what follows).
The third decision scheme comes from the application of the Rao test design criterion to the problem at hand.
Finally, the last architecture is devised using the Wald test design criterion where we do not exploit the maximum likelihood estimates
of the parameters under $H_1$, but those obtained by means of the recursive estimation algorithm.

As preliminary step towards the receiver derivations, let us define the following quantities. Specifically,
denote by $\bZ=[\bz_1 \ \bz_2]$ the primary data matrix and $\bZ_S=[\bz_{11} \ \ldots \ \bz_{1K} \ \bz_{21} \ \ldots \ \bz_{2K}]$ the 
overall matrix  of the training samples. Moreover, the probability density functions (pdfs) of $\bZ$ 
under $H_0$ and $H_1$ are given by
\be
f(\bZ;\bM,H_0)=\frac{\exp\left\{ -\frac{1}{2}\mbox{Tr}[\bM^{-1}\bZ\bZ^T] \right\}}{(2 \pi)^{N} \det(\bM)},
\ee
and
\begin{multline}
f(\bZ;\bM,\alpha_1,\alpha_2,H_1)
=\frac{1}{(2 \pi)^{N} \det(\bM)}
\\
\times \exp\Big\{ -\frac{1}{2}\mbox{Tr}\{\bM^{-1}[( \bz_1 - \alpha_1 \bv_1 + \alpha_2\bv_2)
(\bz_1 - \alpha_1 \bv_1 + \alpha_2\bv_2)^T
\\
+ (\bz_2 - \alpha_1 \bv_2 - \alpha_2\bv_1) (\bz_2 - \alpha_1 \bv_2 - \alpha_2\bv_1)^T ]\}\Big\},
\end{multline}
respectively, while the pdf of $\bZ_S$ is the same under both hypotheses, namely
\be
f(\bZ_S;\bM)=\frac{\exp\left\{ -\frac{1}{2}\mbox{Tr}[\bM^{-1}\bZ_S\bZ_S^T] \right\}}{(2 \pi)^{NK} \det^{K}(\bM)}.
\ee

\subsection{Two-Step GLRT}
Assume that $\bM$ is known, then the GLRT based upon primary data has the following form
\be
\frac{\dmax_{\alpha_1}\dmax_{\alpha_2} f(\bZ;\bM,\alpha_1,\alpha_2,H_1)}{f(\bZ;\bM,H_0)}\test \eta,
\label{eqn:2SGLRT_000}
\ee
where $\eta$ is the detection threshold\footnote{Hereafter, $\eta$ is used to denote the detection threshold or any proper modification of 
it for all the considered receivers.} chosen to ensure the desired level for the Probability of False Alarm ($P_{fa}$).
In Appendix \ref{appendix:derivation2SGLRT}, we show that (\ref{eqn:2SGLRT_000}) is equivalent to the following decision rule
\be
\frac{t_1(\bM) + t_2(\bM)}{\bv_1^T\bM^{-1}\bv_1 + \bv_2^T\bM^{-1}\bv_2} \test \eta,
\label{eqn:2SGLRT_00}
\ee
where $t_1(\bM)=(\bv_1^T\bM^{-1}\bz_1 + \bv_2^T\bM^{-1}\bz_2)^2$ and $t_2(\bM)=(\bv_1^T\bM^{-1}\bz_2 - \bv_2^T\bM^{-1}\bz_1)^2$.
The adaptivity is achieved replacing $\bM$ with
\be
\bS = \bZ_S\bZ_S^T,
\label{eqn:sampleCov}
\ee
namely $2K$-times the sample covariance matrix obtained from the secondary data.
Summarizing, the Two-Step GLRT is given by
\be
\tTSGLRT = \frac{t_1(\bS) + t_2(\bS)}{\bv_1^T\bS^{-1}\bv_1 + \bv_2^T\bS^{-1}\bv_2} \test \eta.
\label{eqn:SSAMF}
\ee
In the following, we refer to the above decision scheme as Symmetric Spectrum-AMF (SS-AMF).

\subsection{GLRT-based Receiver}
\label{Subsection:GLRTbased}
In this subsection, we propose an ad-hoc receiver exploiting the GLRT design idea. To this end, observe
that the GLR based upon primary and secondary data can be written as
\be
\frac{\dmax_{\alpha_1,\alpha_2}\dmax_{\bM} f(\bZ;\bM,\alpha_1,\alpha_2,H_1)f(\bZ_S;\bM)}{\dmax_{\bM} f(\bZ;\bM,H_0)f(\bZ_S;\bM)}
\ee
and involves the joint maximization of the pdfs with respect to $\alpha_i$, $i=1,2$, and $\bM$, which becomes
an intractable problem from a mathematical point of view. To redress this difficulty, we modify the GLRT approach
according to the following rationale:
\begin{enumerate}
\item assume that $\alpha_i$, $i=1,2$, are known and compute the GLRT, namely, perform the optimization with respect to $\bM$;
\item optimize the compressed likelihood function obtained at previous step with respect to $\alpha_i$, $i=1,2$, by means
of an iterative estimation algorithm.
\end{enumerate}
Thus, the $(K+1)$th root of the compressed likelihood functions under both hypotheses and with respect to $\bM$ are given by
\begin{align}
&f_1(\alpha_1,\alpha_2;\bZ)=[f(\bZ;\widehat{\bM}_1,\alpha_1,\alpha_2,H_1)f(\bZ_S;\widehat{\bM}_1)]^{1/(K+1)} \nonumber
\\
&\propto
\det[(\bz_1 - \bm_1(\alpha_1,\alpha_2))(\bz_1 - \bm_1(\alpha_1,\alpha_2))^T \nonumber
\\
&+
(\bz_2 - \bm_2(\alpha_1,\alpha_2))(\bz_2 - \bm_2(\alpha_1,\alpha_2))^T+\bS]^{-1}
\label{eqn:compressedH1}
\end{align}
under $H_1$, where
$ \bm_1(\alpha_1,\alpha_2)=\alpha_1 \bv_1 - \alpha_2\bv_2$, $\bm_2(\alpha_1,\alpha_2)=\alpha_1 \bv_2 + \alpha_2\bv_1$, and
\begin{multline}
f_0(\bZ)=[f(\bZ;\widehat{\bM}_0,H_0)f(\bZ_S;\widehat{\bM}_0)]^{1/(K+1)}
\\
\propto \frac{1}{\det[\bZ\bZ^T + \bS]},
\label{eqn:compressedH0}
\end{multline}
under $H_0$. In (\ref{eqn:compressedH1}) and (\ref{eqn:compressedH0}), $\bS$ 
is defined by (\ref{eqn:sampleCov}) and the estimates of $\bM$ under $H_0$ and $H_1$
are given by
\begin{multline}
\widehat{\bM}_1=\frac{1}{2K+2}[(\bz_1 - \bm_1(\alpha_1,\alpha_2))(\bz_1 - \bm_1(\alpha_1,\alpha_2))^T
\\
+
(\bz_2 - \bm_2(\alpha_1,\alpha_2))(\bz_2 - \bm_2(\alpha_1,\alpha_2))^T+\bS]
\label{eqn:MLestimateM}
\end{multline}
and $\widehat{\bM}_0=\frac{1}{2K+2}[\bZ\bZ^T + \bS]$,
respectively. It still remains to optimize $f_1(\alpha_1,\alpha_2;\bZ)$ over $\alpha_i$, $i=1,2$, which is tantamount
to solving
\begin{multline}
\dmin_{\alpha_1,\alpha_2} \det[(\bz_1 - \bm_1(\alpha_1,\alpha_2))(\bz_1 - \bm_1(\alpha_1,\alpha_2))^T
\\
+ (\bz_2 - \bm_2(\alpha_1,\alpha_2))(\bz_2 - \bm_2(\alpha_1,\alpha_2))^T+\bS].
\label{eqn:minimization00}
\end{multline}
Now, let us focus on the determinant of $(2K+2)\widehat{\bM}_1$ and observe that it can be written as
\begin{align}
&\det[\bS]\det[\bI + (\bZ-\bV)^T \bS^{-1}(\bZ-\bV)] \nonumber
\\
&=\det[\bS]\{[1+(\bz_1 - \bm_1(\alpha_1,\alpha_2))^T \bS^{-1} (\bz_1 - \bm_1(\alpha_1,\alpha_2))] \nonumber
\\
& \times [1+(\bz_2 - \bm_2(\alpha_1,\alpha_2))^T \bS^{-1} (\bz_2 - \bm_2(\alpha_1,\alpha_2))] \nonumber
\\
&-[(\bz_1 - \bm_1(\alpha_1,\alpha_2))^T \bS^{-1} (\bz_2 - \bm_2(\alpha_1,\alpha_2))]^2\} \nonumber 
\\
&=\det[\bS]h(\alpha_1,\alpha_2)
\end{align}
where $\bV=[\bm_1(\alpha_1,\alpha_2) \ \bm_2(\alpha_1,\alpha_2)]\in\R^{N\times 2}$. As a consequence, (\ref{eqn:minimization00})  amounts to
$\dmin_{\alpha_1,\alpha_2} h(\alpha_1,\alpha_2)$.
At this point, before describing the procedure aimed at finding the stationary points of $h(\alpha_1,\alpha_2)$,
an intermediate result is mandatory. More precisely, the following proposition holds true.
\begin{prop}
\label{proposition:coerciveFunction}
$h(\alpha_1,\alpha_2)$ is a coercive or radially unbounded function.
\end{prop}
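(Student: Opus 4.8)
The plan is to bound $h(\alpha_1,\alpha_2)$ from below by a strictly positive function of $(\alpha_1,\alpha_2)$ that already diverges, exploiting the positive definiteness of $\bS$ (which holds almost surely, since $K\geq N/2$ guarantees that $\bZ_S$ has at least $N$ columns) together with an isometry property of the map $(\alpha_1,\alpha_2)\mapsto(\bm_1(\alpha_1,\alpha_2),\bm_2(\alpha_1,\alpha_2))$.

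First I would introduce the shorthand $a=(\bz_1-\bm_1)^T\bS^{-1}(\bz_1-\bm_1)$, $b=(\bz_2-\bm_2)^T\bS^{-1}(\bz_2-\bm_2)$, and $c=(\bz_1-\bm_1)^T\bS^{-1}(\bz_2-\bm_2)$, so that, from the expression derived just before the proposition, $h=(1+a)(1+b)-c^2$. Since $\bS^{-1}\in\bS_{++}^N$, writing $\bS^{-1}=\bS^{-1/2}\bS^{-1/2}$ and applying the Cauchy--Schwarz inequality to the vectors $\bS^{-1/2}(\bz_1-\bm_1)$ and $\bS^{-1/2}(\bz_2-\bm_2)$ yields $c^2\leq ab$; hence $h\geq(1+a)(1+b)-ab=1+a+b$. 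Denoting by $\lambda>0$ the smallest eigenvalue of $\bS^{-1}$, one further gets $a+b\geq\lambda\big(\|\bz_1-\bm_1\|^2+\|\bz_2-\bm_2\|^2\big)$, so that $h(\alpha_1,\alpha_2)\geq 1+\lambda\big(\|\bz_1-\bm_1\|^2+\|\bz_2-\bm_2\|^2\big)$.

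Next I would exploit the structure $\bm_1=\alpha_1\bv_1-\alpha_2\bv_2$, $\bm_2=\alpha_1\bv_2+\alpha_2\bv_1$. A direct expansion (the cross terms cancel because $\bv_1^T\bv_2=\bv_2^T\bv_1$), combined with $\|\bv\|^2=\|\bv_1\|^2+\|\bv_2\|^2=1$, gives the key identity $\|\bm_1\|^2+\|\bm_2\|^2=(\alpha_1^2+\alpha_2^2)(\|\bv_1\|^2+\|\bv_2\|^2)=\alpha_1^2+\alpha_2^2$; that is, stacking $(\bm_1,\bm_2)$ into $\R^{2N}$, the map $(\alpha_1,\alpha_2)\mapsto(\bm_1,\bm_2)$ is an isometry. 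Viewing $(\bz_1-\bm_1,\bz_2-\bm_2)$ and $(\bm_1,\bm_2)$ as vectors of $\R^{2N}$ and using the reverse triangle inequality then gives $\sqrt{\|\bz_1-\bm_1\|^2+\|\bz_2-\bm_2\|^2}\geq\sqrt{\alpha_1^2+\alpha_2^2}-\sqrt{\|\bz_1\|^2+\|\bz_2\|^2}$. Combining with the bound of the previous paragraph, for $\alpha_1^2+\alpha_2^2\geq\|\bz_1\|^2+\|\bz_2\|^2$ one obtains
\[ h(\alpha_1,\alpha_2)\geq 1+\lambda\left(\sqrt{\alpha_1^2+\alpha_2^2}-\sqrt{\|\bz_1\|^2+\|\bz_2\|^2}\right)^2\longrightarrow+\infty \]
as $\alpha_1^2+\alpha_2^2\to\infty$, which is exactly the claimed coercivity (radial unboundedness).

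The argument is essentially self-contained, and no step is a genuine obstacle: the only points needing a word of care are the invertibility and positive definiteness of $\bS$ (granted with probability one by $K\geq N/2$) and the verification of the isometry identity, which is a routine algebraic expansion.
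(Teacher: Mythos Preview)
Your proof is correct and follows essentially the same route as the paper: both apply the Cauchy--Schwarz inequality (after whitening by $\bS^{-1/2}$) to the cross term to obtain $h\geq 1+a+b$, and then argue that this lower bound is a coercive quadratic in $(\alpha_1,\alpha_2)$. The only cosmetic difference is that the paper stays in whitened coordinates and expands the quadratic directly to exhibit the positive leading coefficient $\|\bv_{1w}\|^2+\|\bv_{2w}\|^2$, whereas you pass back to Euclidean norms via the smallest eigenvalue of $\bS^{-1}$ and then invoke the reverse triangle inequality together with the isometry identity $\|\bm_1\|^2+\|\bm_2\|^2=\alpha_1^2+\alpha_2^2$.
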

\begin{proof}
See Appendix \ref{appendix:coerciveFunction}.
\end{proof}
As a consequence of the above proposition, 
the continuous and non-negative function $h(\alpha_1,\alpha_2)$ has a global minimum, which should be sought among its stationary points.
To find them, we follow an iterative procedure based on alternating (cyclic) optimizations, which estimates
$\alpha_m$, $m=1,2$, assuming that $\alpha_n$, $n\neq m$, $n=1,2$, is known. More precisely, let us start fixing $\alpha_2={\Halpha}_2^{(0)}$ with
${\Halpha}_2^{(0)}$ known\footnote{Notice that it is also possible to start from $\alpha_1={\Halpha}_1^{(0)}$ since the procedure is dual.}, 
and minimize $h_1(\alpha_1)=h(\alpha_1,{\Halpha}_2^{(0)})$ over $\alpha_1$.
To this end, we set the derivative of $h_1(\alpha_1)$ to zero and solve the following equation
\begin{align}
& \frac{d}{d\alpha_1}[h_1(\alpha_1)]=
-2\bv_1^\dag\bS^{-1}(\bz_1-\bm_1(\alpha_1,{\Halpha}_2^{(0)})) \nonumber
\\
&\times \left[1+(\bz_2-\bm_2(\alpha_1,
{\Halpha}_2^{(0)}))^\dag\bS^{-1}(\bz_2-\bm_2(\alpha_1,{\Halpha}_2^{(0)}))\right] \nonumber
\\
& -2\bv_2^\dag\bS^{-1}(\bz_2-\bm_2(\alpha_1,{\Halpha}_2^{(0)})) \nonumber
\\
&\times \left[1+(\bz_1-\bm_1(\alpha_1,{\Halpha}_2^{(0)}))^\dag\bS^{-1}
(\bz_1-\bm_1(\alpha_1,{\Halpha}_2^{(0)}))\right]\nonumber
\\
& -2(\bz_1-\bm_1(\alpha_1,{\Halpha}_2^{(0)}))^\dag\bS^{-1}(\bz_2-\bm_2(\alpha_1,{\Halpha}_2^{(0)})) \nonumber
\\
& \times \left[ -\bv^T_1\bS^{-1}(\bz_2-\bm_2(\alpha_1,{\Halpha}_2^{(0)})) \right. \nonumber
\\
&\left. - \bv^T_2\bS^{-1}(\bz_1-\bm_1(\alpha_1,{\Halpha}_2^{(0)})) \right]=0.
\end{align}
It is tedious but not difficult to show that the above equation can be recast as
\be
\frac{d}{d\alpha_1}[h_1(\alpha_1)]=\sum_{n=1}^4 b_i \alpha_1^{4-n}=0,
\label{eqn:thirdDegree}
\ee
where the expressions of the coefficients $b_i$, $i=1,\ldots,4$, are given in Appendix \ref{appendix:coefficients}.
Observe that (\ref{eqn:thirdDegree}) is a cubic equation with real coefficients and, hence, it admits at least one real solution.
The solutions of this equation can be explicitly obtained resorting to Cardano's method \cite{higherAlgebra} and we choose 
that real one, ${\Halpha}_1^{(1)}$, leading to the minimum of $h_1(\alpha_1)$.

Once ${\Halpha}_1^{(1)}$ is available, let $h_2(\alpha_2)=h(\Halpha_1^{(1)},\alpha_2)$ and repeat the same line of reasoning
used to find $\Halpha_1^{(1)}$ also for $\alpha_2$. In other words, we set to zero the derivative of $h_2(\alpha_2)$ with respect to $\alpha_2$,
namely
\begin{align}
&\frac{d}{d\alpha_2}[h_2(\alpha_2)]=
2\bv_2^\dag\bS^{-1}(\bz_1-\bm_1(\Halpha_1^{(1)},\alpha_2)) \nonumber
\\
& \times \left[1+(\bz_2-\bm_2(\Halpha_1^{(1)},
\alpha_2))^\dag\bS^{-1}(\bz_2-\bm_2(\Halpha_1^{(1)},\alpha_2))\right] \nonumber
\\
& -2\bv_1^\dag\bS^{-1}(\bz_2-\bm_2(\Halpha_1^{(1)},\alpha_2)) \nonumber
\\
&\times \left[1+(\bz_1-\bm_1(\Halpha_1^{(1)},\alpha_2))^\dag\bS^{-1}
(\bz_1-\bm_1(\Halpha_1^{(1)},\alpha_2))\right]\nonumber
\\
& -2(\bz_1-\bm_1(\Halpha_1^{(1)},\alpha_2))^\dag\bS^{-1}(\bz_2-\bm_2(\Halpha_1^{(1)},\alpha_2)) \nonumber
\\
& \times \left[ \bv^T_2\bS^{-1}(\bz_2-\bm_2(\Halpha_1^{(1)},\alpha_2)) \right. \nonumber
\\
&\left. - \bv^T_1\bS^{-1}(\bz_1-\bm_1(\Halpha_1^{(1)},\alpha_2)) \right]=0.
\end{align}
After algebraic manipulations of the above equation, we obtain another cubic equation
\be
\frac{d}{d\alpha_2}[h_2(\alpha_2)]=\sum_{n=1}^4 d_i \alpha_2^{4-n}=0.
\label{eqn:cubic2}
\ee
The real coefficients of the above equation are given in Appendix \ref{appendix:coefficients}.
Again, Cardano's method comes in handy to find closed-form expressions for the solutions of (\ref{eqn:cubic2}).
Among them, we choose the real one, ${\Halpha}_2^{(1)}$ say, leading to the minimum of $h_2(\alpha_2)$.

Generally speaking, the above iterations can be repeated to obtain values as fine 
as possible in the sense of the maximum likelihood estimation. Specifically, at the $n$th step,
the pair $({\Halpha}_1^{(n)},{\Halpha}_2^{(n)})$ is available.
Accordingly, we build the function $h_1(\alpha_1)$ (or $h_2(\alpha_2)$)
and find $\alpha_1$ (or $\alpha_2$) which returns the minimum value of $h_1(\alpha_1)$ (or $h_2(\alpha_2)$). This point represents the 
updated estimate of $\alpha_1$ (or $\alpha_2$) and can be used in the next iteration until a stopping condition is not satisfied.
The general procedure is described by Algorithm \ref{algorithm:AlgorithmPP}.

\begin{algorithm}
\caption{: Ad-hoc algorithm to estimate $(\alpha_1,\alpha_2)$.}
\begin{algorithmic}[1]
\REQUIRE $\bz_1$, $\bz_2$, $\bS$, $\bv_1$, $\bv_2$, $\epsilon_1$, $\epsilon_2$. 

\ENSURE ML estimates of $(\alpha_1,\alpha_2)$.

\STATE set $n=0$ , $l=1$ \OR $l=2$, 

\IF{$l=1$}
\STATE set $\Halpha_2^{(0)}=\bar{\alpha}_2$
\ELSE
\STATE set $\Halpha_1^{(0)}=\bar{\alpha}_1$
\ENDIF

\REPEAT 
\STATE{set $n=n+1$}
\IF{$l=1$}
\STATE{replace $\alpha_2$ in $h(\alpha_1,\alpha_2)$ with $\Halpha_2^{(n-1)}$} and compute $\Halpha_1^{(n)}$
\STATE use $\Halpha_1^{(n)}$ to update $\Halpha_2^{(n-1)}$ and to obtain $\Halpha_2^{(n)}$
\ELSE
\STATE{replace $\alpha_1$ in $h(\alpha_1,\alpha_2)$ with $\Halpha_1^{(n-1)}$} and compute $\Halpha_2^{(n)}$
\STATE use $\Halpha_2^{(n)}$ to update $\Halpha_1^{(n-1)}$ and to obtain $\Halpha_1^{(n)}$
\ENDIF
\UNTIL{$|\Halpha_1^{(n)}-\Halpha_1^{(n-1)}|>\epsilon_1$ \AND $|\Halpha_2^{(n)}-\Halpha_2^{(n-1)}|>\epsilon_2$}
\end{algorithmic}
\label{algorithm:AlgorithmPP}
\end{algorithm}

Let $\tilde{\alpha}_1$ and $\tilde{\alpha}_2$ the estimates of $\alpha_1$ and $\alpha_2$, respectively, at 
the $n$th iteration of the estimation algorithm, then
the final expression of the ad-hoc receiver is given by
\begin{multline}
t={\det[\bZ\bZ^T + \bS]}
\\
\times 
\det[(\bz_1 - \bm_1(\tilde{\alpha}_1,\tilde{\alpha}_2))(\bz_1 - \bm_1(\tilde{\alpha}_1,\tilde{\alpha}_2))^T
\\
+
(\bz_2 - \bm_2(\tilde{\alpha}_1,\tilde{\alpha}_2))(\bz_2 - \bm_2(\tilde{\alpha}_1,\tilde{\alpha}_2))^T+\bS]^{-1}
\test\eta.
\end{multline}
In the sequel, we will refer to this architecture as Iterative GLRT (I-GLRT).

Two remarks are now in order. First observe that it is possible to start the iterations exploiting the estimates of
$\alpha_1$ and $\alpha_2$ obtained by means of the two-step GLRT design procedure, which are given by (\ref{eqn:TSestimateA1}) 
and (\ref{eqn:TSestimateA2}). Second, the iterative procedure yields a sequence of estimates
\be
(\Halpha_1^{(1)},\Halpha_2^{(1)}), \ (\Halpha_1^{(2)},\Halpha_2^{(2)}),
\ (\Halpha_1^{(3)},\Halpha_2^{(3)}), \dots, (\Halpha_1^{(n)},\Halpha_2^{(n)}),\ldots
\label{eqn:sequence1}
\ee
which shares an important property shown in the following.

\begin{prop}
\label{proposition:convergence}
From the sequence (\ref{eqn:sequence1}) it is possible to extract a 
subsequence that converges to a stationary point of $h(\alpha_1,\alpha_2)$.
\end{prop}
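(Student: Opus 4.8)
The plan is to exploit the monotone decrease of $h$ along the iterates together with the compactness furnished by Proposition~\ref{proposition:coerciveFunction}, and then to pass to the limit in the first-order optimality conditions of the two inner minimizations. First I would observe that each inner step amounts to minimizing a univariate restriction $\alpha_1\mapsto h(\alpha_1,\Halpha_2^{(n-1)})$ or $\alpha_2\mapsto h(\Halpha_1^{(n)},\alpha_2)$, which is a nonnegative polynomial of degree at most four (it is coercive unless constant), so its minimum is attained; hence the updates are well defined and $h^{(n)}:=h(\Halpha_1^{(n)},\Halpha_2^{(n)})\le h(\Halpha_1^{(n)},\Halpha_2^{(n-1)})\le h^{(n-1)}$. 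Since $h\ge 0$, the sequence $\{h^{(n)}\}$ is non-increasing and converges to some $h^{\infty}\ge 0$. Moreover every iterate lies in the sublevel set $\mathcal{L}=\{(\alpha_1,\alpha_2):h(\alpha_1,\alpha_2)\le h^{(0)}\}$, which is closed by continuity of $h$ and bounded by Proposition~\ref{proposition:coerciveFunction}, hence compact; so by the Bolzano--Weierstrass theorem there is a subsequence with $(\Halpha_1^{(n_k)},\Halpha_2^{(n_k)})\to(\bar\alpha_1,\bar\alpha_2)\in\mathcal{L}$.

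Next I would show that $(\bar\alpha_1,\bar\alpha_2)$ is a stationary point, i.e. $\partial h/\partial\alpha_1=\partial h/\partial\alpha_2=0$ there, recalling that $h$, being a $2\times 2$ determinant whose entries are quadratic in $(\alpha_1,\alpha_2)$, is a polynomial and hence continuously differentiable. For the $\alpha_2$-component, minimality of $\Halpha_2^{(n_k)}$ gives $h(\Halpha_1^{(n_k)},\Halpha_2^{(n_k)})\le h(\Halpha_1^{(n_k)},\alpha_2)$ for all $\alpha_2$; letting $k\to\infty$ and using continuity yields $h(\bar\alpha_1,\bar\alpha_2)\le h(\bar\alpha_1,\alpha_2)$ for all $\alpha_2$, so $\bar\alpha_2$ globally minimizes $h(\bar\alpha_1,\cdot)$ and $\partial h/\partial\alpha_2(\bar\alpha_1,\bar\alpha_2)=0$. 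For the $\alpha_1$-component I would look one iteration ahead: $\Halpha_1^{(n_k+1)}$ still lies in $\mathcal{L}$ (since $h(\Halpha_1^{(n_k+1)},\Halpha_2^{(n_k)})\le h^{(n_k)}\le h^{(0)}$), so after passing to a further subsequence $\Halpha_1^{(n_k+1)}\to\tilde\alpha_1$. Minimality of $\Halpha_1^{(n_k+1)}$ for $h(\cdot,\Halpha_2^{(n_k)})$ gives $h(\Halpha_1^{(n_k+1)},\Halpha_2^{(n_k)})\le h(\alpha_1,\Halpha_2^{(n_k)})$ for all $\alpha_1$, and passing to the limit gives $h(\tilde\alpha_1,\bar\alpha_2)\le h(\alpha_1,\bar\alpha_2)$ for all $\alpha_1$. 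Finally, the sandwich $h^{(n_k+1)}\le h(\Halpha_1^{(n_k+1)},\Halpha_2^{(n_k)})\le h^{(n_k)}$ with $h^{(n)}\to h^{\infty}$ forces $h(\tilde\alpha_1,\bar\alpha_2)=h^{\infty}=h(\bar\alpha_1,\bar\alpha_2)$, whence $\bar\alpha_1$ is itself a global minimizer of $h(\cdot,\bar\alpha_2)$ and $\partial h/\partial\alpha_1(\bar\alpha_1,\bar\alpha_2)=0$. This proves that the extracted subsequence converges to a stationary point of $h(\alpha_1,\alpha_2)$.

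The routine part is the bookkeeping with the nested subsequences and the (easy) verification that $h$ is smooth, so that all the above passages to the limit are legitimate. The main obstacle, and the reason for the ``one step ahead'' device, is that — unlike plain gradient descent — the two coordinates are refreshed at different instants: passing to the limit in the naive stationarity relation $\partial h/\partial\alpha_1(\Halpha_1^{(n_k)},\Halpha_2^{(n_k-1)})=0$ only yields stationarity of $h(\cdot,\hat\alpha_2)$ at $\bar\alpha_1$ for some limit point $\hat\alpha_2$ of $\{\Halpha_2^{(n_k-1)}\}$ which need not equal $\bar\alpha_2$. The argument circumvents this by using the convergence of the \emph{whole} objective sequence $\{h^{(n)}\}$ to a single value $h^{\infty}$, thereby transferring the minimality property back to the limit point $(\bar\alpha_1,\bar\alpha_2)$ itself.
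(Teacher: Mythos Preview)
Your proof is correct and follows essentially the same strategy as the paper: monotone decrease of $h$ along the iterates plus coercivity (Proposition~\ref{proposition:coerciveFunction}) gives a convergent subsequence, and then one passes to the limit in the coordinatewise optimality conditions, using convergence of the whole objective sequence $\{h^{(n)}\}$ to handle the staggered update of $\alpha_1$. The only cosmetic difference is that, for the $\alpha_1$-direction, the paper avoids your extra subsequence extraction by testing directly against the single point $\Halpha_1^{\ast}:=\arg\min_{\alpha_1} h(\alpha_1,\bar\alpha_2)$ (so that $h(\Halpha_1^{\ast},\Halpha_2^{(n_k)})\ge h(\Halpha_1^{(n_k+1)},\Halpha_2^{(n_k)})\ge g_{n_k+1}\to h(\bar\alpha_1,\bar\alpha_2)$), which is marginally cleaner but equivalent to your argument.
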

\begin{proof}
See Appendix \ref{appendix:convergence}.
\end{proof}
The proof of this proposition highlights that (\ref{eqn:sequence1})
induces a decreasing sequence of objective function $h(\alpha_1,\alpha_2)$ values. This implies that
if we use $(\Halpha_1^{\mbox{\tiny TS}}(\bS),\Halpha_2^{\mbox{\tiny TS}}(\bS))$, given by (\ref{eqn:TSestimateA1}) and (\ref{eqn:TSestimateA2}), respectively, 
as starting point of the algorithm, then it is possible to attain better estimates of $(\alpha_1,\alpha_2)$ in the 
sense of the likelihood optimization. Interestingly, they also lead to better
detection performances as it will be shown in Section \ref{Sec:PerformanceAnalysis}.
Additionally, it is worth noticing that when $(\Halpha_1^{\mbox{\tiny TS}}(\bS),\Halpha_2^{\mbox{\tiny TS}}(\bS))$ is contained
within a suitable neighborhood of the global minimum, it may happen that the optimum value belongs to the trajectory described by
(\ref{eqn:sequence1}) and the I-GLRT becomes asymptotically equivalent to the plain GLRT.
On the other hand, the asymptotic estimates provided by Algorithm 1 are the coordinates of a stationary point 
that could be either a local minimum or a saddle point.

\subsection{Rao test}
In Section \ref{Sec:ProblemFormulation}, we have observed that 
the relevant parameter to the decision problem (\ref{eqn:hypothesistest01}) is given by the vector
$\btheta_A=[\alpha_1 \ \alpha_2]^T$, while the elements of $\bM$ represent the nuisance parameters.
Moreover, since $\bM\in \bS_{++}^N$, it can be well-represented by the $[(N-1)N/2]$-dimensional vector
$\btheta_B=\bff(\bM)\in\R^{(N+1)N/2}$,
where $\bff(\cdot)$ is a vector-valued function that selects in unequivocal way (bijection) the elements of a symmetric matrix.
Let $\btheta=[\btheta_A^T \ \btheta_B^T]^T$ the overall parameter vector for the problem at hand and denote
by $\btheta_0$ the estimate of $\btheta$ under the $H_0$ hypothesis. It is evident that
$\btheta_0=\left[ \bzero \ \bff(\bS_0) \right]^T$,
where $\bS_0=\bZ\bZ^T+\bS$. Finally, let us partition the Fisher information matrix as follows
\be
\bF(\btheta)=\left[\begin{array}{cc}
\bF_{AA}(\btheta) & \bF_{AB}(\btheta)
\\
\bF_{BA}(\btheta) & \bF_{BB}(\btheta)
\end{array}\right],
\ee
where
\begin{multline}
\bF_{XY}(\btheta)
\\
=-E\left[
\frac{\partial^2}{\partial \btheta_X \partial \btheta_Y^T} \log \big(f(\bZ;\bM,\alpha_1,\alpha_2,H_1)f(\bZ_S;\bM)\big)
\right]
\\
\in\R^{x\times y}.
\label{eqn:partitionFisher}
\end{multline}
In (\ref{eqn:partitionFisher}), $(X,Y)\in\{A,B\}\times \{A,B\}$, 
\be
x=\left\{
\begin{array}{ll}
2, & \mbox{if } X=A,
\\
(N-1)N/2, & \mbox{if } X=B,
\end{array}
\right.
\ee
and
\be
y=\left\{
\begin{array}{ll}
2, & \mbox{if } Y=A,
\\
(N-1)N/2, & \mbox{if } Y=B.
\end{array}
\right.
\ee
With the above definitions in mind, we can provide the expression of the Rao test for the problem at hand \cite{Kay-SSP-DT}
\begin{multline}
\left\{\frac{\partial}{\partial \btheta_A}\big[\log\big(f(\bZ;\bM,\alpha_1,\alpha_2,H_1)f(\bZ_S;\bM)\big)\big]\right\}^T_{\btheta=\btheta_0}
\\
\times\left\{[\bF(\btheta)^{-1}]_{AA}\right\}_{\btheta=\btheta_0}
\\
\times\left\{\frac{\partial}{\partial \btheta_A}\big[\log\big(f(\bZ;\bM,\alpha_1,\alpha_2,H_1)f(\bZ_S;\bM)\big)\big]\right\}_{\btheta=\btheta_0},
\end{multline}
where $[\bF(\btheta)^{-1}]_{AA}$ is the sub-block of the inverse of the Fisher information matrix formed by selecting
its first two rows and the first two columns; in addition it can be written as
\be
[\bF(\btheta)^{-1}]_{AA} = [\bF_{AA}(\btheta) - \bF_{AB}(\btheta)\bF^{-1}_{BB}(\btheta)\bF_{BA}(\btheta)]^{-1}.
\label{eqn:FAAmeno1}
\ee
In Appendix \ref{appendix:derivationRao}, it is proved that the Rao test can be simplified as follows
\be
\tRAO = \frac{t_1(\bS_0)+t_2(\bS_0)}
{\bv_1^T\bS_0^{-1}\bv_1 + \bv_2^T\bS_0^{-1}\bv_2}\test\eta.
\label{eqn:raoTest}
\ee
In the following, we refer to the above detector as Symmetric Spectrum-RAO test (SS-RAO). 
Observe that the decision statistic in (\ref{eqn:raoTest}) is none other than 
that in (\ref{eqn:SSAMF}) with $\bS_0$, which is the sample covariance matrix over both primary and secondary data, in place 
of $\bS$, namely the sample covariance matrix over the secondary data. This similarity is also encountered 
in the non-symmetric case where the Rao test \cite{DeMaio-RAO} and the AMF \cite{Kelly-Nitzberg} share the same expression
but for the sample covariance matrix.

\subsection{Receiver based upon the Wald test}
In the presence of nuisance parameters, the Wald test exhibits the following form \cite{Kay-SSP-DT}
\be
\left(\btheta_{A,1}-\btheta_{A,H_0}\right)^T
\left\{[\bF(\btheta)^{-1}]_{AA}\right\}_{\btheta=\btheta_1}^{-1}
\left(\btheta_{A,1}-\btheta_{A,H_0}\right)\test \eta,
\ee
where $\btheta_{A,H_0}=[0 \ 0]^T$ is the value of the relevant parameters under $H_0$, 
$\btheta_{A,1}=[\tilde{\alpha}_1 \ \tilde{\alpha}_2]^T\in \R^{2\times 1}$ is the maximum likelihood
estimate of the relevant parameter under $H_1$, and $\btheta_1$ is the maximum likelihood estimate of the entire parameter vector
under $H_1$, namely $\btheta_1=\left[\btheta_{A,1} \ \bff(\widehat{\bM}_1(\tilde{\alpha}_1,\tilde{\alpha}_2)) \right]^T$
with $\widehat{\bM}_1(\tilde{\alpha}_1,\tilde{\alpha}_2)$ the maximum likelihood estimate of the interference covariance matrix under $H_1$.
Since closed form expression for $\btheta_1$ is not available, we resort to $\tilde{\alpha}_1$ and $\tilde{\alpha}_2$ given in 
Section \ref{Subsection:GLRTbased} in place of the exact maximum likelihood estimates.
Thus, the approximated Wald test, also referred to in the following as Iterative Wald test (I-WALD), becomes
\be
\left[ \Halpha_1^{(N)} \ \Halpha_2^{(N)} \right]
\left\{[\bF(\btheta)^{-1}]_{AA}\right\}_{\btheta=\bar{\btheta}_1}^{-1}
\left[ 
\begin{array}{c}
\Halpha_1^{(N)} 
\\ 
\Halpha_2^{(N)}
\end{array}
\right]\test \eta,
\ee
where $(\Halpha_1^{(N)},\Halpha_2^{(N)})$ are obtained after $2N$ iterations of Algorithm \ref{algorithm:AlgorithmPP} using as
initial seed the $\Halpha_1^{\mbox{\tiny TS}}(\bS)$  or $\Halpha_2^{\mbox{\tiny TS}}(\bS)$,
$\bar{\btheta}_1=\left[ \Halpha_1^{(N)} \ \Halpha_2^{(N)} \ \bff(\widehat{\bM}_1(\Halpha_1^{(N)},\Halpha_2^{(N)})) \right]^T$,
and $\widehat{\bM}_1(\Halpha_1^{(N)},\Halpha_2^{(N)})$ is given by (\ref{eqn:MLestimateM}) with $(\Halpha_1^{(N)},\Halpha_2^{(N)})$
in place of $(\alpha_1,\alpha_2)$.

Finally, exploiting results contained in Appendix \ref{appendix:derivationRao}, it straightforward
to obtain that
$\left\{[\bF(\btheta)^{-1}]_{AA}\right\}_{\btheta=\bar{\btheta}_1}^{-1}=\sigma_F \bI_2$,
where $\sigma_F = \bv_1^T\left[\widehat{\bM}_1(\Halpha_1^{(N)},\Halpha_2^{(N)})\right]^{-1}\bv_1 + \bv_2^T\left[\widehat{\bM}_1(\Halpha_1^{(N)},\Halpha_2^{(N)})\right]^{-1}\bv_2$.

\section{Illustrative Examples}
\label{Sec:PerformanceAnalysis}
In this section, we investigate the detection performances of the previously devised detectors in comparison with conventional architectures
that do not exploit the symmetric spectral properties of the interference. The considered competitors are Kelly's GLRT (K-GLRT) \cite{Kelly86},
the Adaptive Matched Filter\footnote{Note that the AMF coincides with the Wald test for problem (\ref{eqn:hypothesistest00}) \cite{DeMaio-WALD-AMF}.} 
(AMF), and the Rao test \cite{DeMaio-RAO}. The analysis is conducted resorting to both simulated and live recorded data. 

\subsection{Simulated data}
Since closed form expressions for the Probability of Detection ($P_d$) and the Probability of False Alarm ($P_{fa}$) are not available,
the numerical examples are obtained by means of standard Monte Carlo counting techniques. Specifically,
we compute the thresholds necessary to ensure a preassigned value of $P_{fa}$ and $P_d$ resorting 
to $100/P_{fa}$ and $10^4$ independent trials, respectively.

The interference is modeled as a circular complex normal random vector with the following covariance matrix
$\bM = \sigma^2_n \bI_N + \sigma^2_c \bM_c$,
where $\sigma^2_n=1$, $\sigma^2_c>0$ is evaluated assuming a clutter-to-noise ratio of 20 dB,
the $(i,j)$th element of $\bM_c$ is given by $\rho_c^{|i-j|^2}e^{j2\pi f_d (i-j)}$
with $\rho_c=0.9$ and $f_d$ the Doppler frequency of the clutter.
Moreover, we assume that the system exploits $N$ temporal channels and that the Signal-to-Noise-plus-Interference Ratio (SINR) shares the following
expression $\mbox{SINR} = |\alpha|^2 \bv(\nu_d)^\dag \bM^{-1} \bv(\nu_d)$,
where the temporal steering vector $\bv(\nu_d)$ is given by
$\bv(\nu_d) = \frac{1}{\sqrt{N}} \left[ 1 \ e^{j2\pi \nu_d} \ \ldots \ e^{j2\pi (N-1) \nu_d} \right]^T$
with $\nu_d$ the normalized Doppler frequency\footnote{Observe that we do not distinguish between the actual and the nominal steering vector
because we assume perfectly matching conditions.}. Finally, in all the considered examples, we set $P_{fa}=10^{-4}$ and 
plot, for comparison purposes, the GLRT for known interference covariance matrix also referred to as benchmark detector.

Before proceeding with the performance comparisons, we establish an empirical criterion to set the number of iterations, $n$ say, required
by Algorithm 1 to provide reliable estimates and by I-GLRT and I-WALD to achieve the best performances. To this end, in Figures \ref{fig:figure01} and 
\ref{fig:figure02} we plot $P_d$ versus SINR for the I-GLRT and the I-WALD assuming
different values of $n$; both figures consider $N=8$, $K=6$, and $\nu_d=0.1$. 
Their inspection highlights that the upper bound on the performance is achieved
exploiting $3\leq n \leq 5$ iterations. 
Moreover, values of $n$ greater than 1 confer to the estimation procedure a more robust behavior with respect to 
possible outliers as shown in 
Figures \ref{fig:figure03} and \ref{fig:figure03a}. Therein, we compare the estimates of $\alpha_1$ and $\alpha_2$ provided 
by Algorithm 1 ($n=5$) with those obtained by means of the two-step design procedure, 
namely $(\Halpha_1^{\mbox{\tiny TS}}(\bS),\Halpha_2^{\mbox{\tiny TS}}(\bS))$. 
The actual values of $\alpha_1$ and $\alpha_2$ are plotted too.
The estimates are obtained using as initial seed of 
Algorithm 1 the estimate $\Halpha_1^{\mbox{\tiny TS}}(\bS)$  or $\Halpha_2^{\mbox{\tiny TS}}(\bS)$.
Finally, further simulation results, not reported here for the sake of brevity, shows that when $\nu_d=0$
the curves of $P_d$ referring to different values of $n$ are one and the same.
In the remaining numerical examples, we use $n=3$, which represents a reasonable trade off between robustness, 
detection performance, and computational load in
different operating conditions.

In Figure \ref{fig:figure04}, we show the performance of the I-GLRT, the I-WALD, the SS-AMF, and the SS-RAO in sample-starved scenarios, namely when
the number of secondary data is lower than the vector size ($K<N$). To this end, we 
set $N=8$ and $K=6$. The plots highlight that the best detection performances are ensured by the I-GLRT with a gain of about
$5$ dB over the SS-AMF and the I-WALD, which share the same behavior, whereas, for the considered values of the parameters,
the SS-RAO is useless since its $P_d$ does not achieve values greater than 0.1 (this was expected due to the small number of training data).

Figures \ref{fig:figure05}, \ref{fig:figure06}, and \ref{fig:figure07} refer to the cases $N<K\leq 2N$, $K=2N$, and $K=4N$, respectively.
Again, the I-GLRT outperforms the other decision structures.
Moreover, observe that the I-GLRT, the SS-AMF, and the I-WALD can ensure performance gains within $1$ dB (as shown in Figure \ref{fig:figure07})
and $5$ dB (as shown in Figure \ref{fig:figure05}) with respect to their natural competitors.
Another important remark concerns the performance hierarchy that keeps unaltered as $K$ increases.
Finally, observe that, for the considered parameter setting, the SS-RAO and its conventional competitor 
achieve $P_d=1$ at reasonable SINR values only when $N=4K$.

\subsection{Real data}
The aim of this section is two-fold. First, we study the CFAR behavior of the introduced detectors
in the presence of live symmetric clutter data which might also deviate from the receivers design hypotheses, then we 
assess their detection performance.
To this end, we exploit the MIT-LL Phase-One radar dataset, which contains land clutter and refers to different bands, polarizations,
range resolutions, and scanning modes.
Each data file is composed of
$N_t$ temporal returns from $N_s$ range cells which are
stored in an  $N_t \times N_s$ complex matrix. Further details on the description of the dataset can be found
in \cite[and references therein]{DeMaio-Foglia}.

Let us begin with the CFAR analysis and set the threshold of the receivers to return $P_{fa}=10^{-4}$ assuming
spatially homogeneous white Gaussian clutter.
These thresholds are exploited to evaluate the actual $P_{fa}$ when the detectors operate in the
presence of measured clutter data. The procedure we adopt to
select the primary and secondary data employed
for computing a realization of the decision statistics is pictorially
described in Figure \ref{fig:figure08}, where the primary cell is denoted by $P_c$ and the set of secondary
data is composed of  $K$ cells ($K$ even) whose number ranges from $P_c-\frac{K}{2}$ to $P_c-1$
and between $P_c+1$ and $P_c+\frac{K}{2}$. In other words the training set
contains the returns from the $\frac{K}{2}$ cells on the left of
$P_c$ and the returns from the $\frac{K}{2}$ cells on the right of the
cell under test.
The $N \times (K+1)$ data window is slided in both time and space until the end of the dataset.
By doing so, the total number of different
data windows is
$N_{trials}=(N_t-N+1)\times(N_s-K)$
coinciding with the
number of trials available to estimate the actual $P_{fa}$.
Otherwise stated, for each data window, we perform the four statistical tests
and, for each of them, we count the number of false alarms $N_{fa}$. The actual false alarm probability,
$\widehat{P}_{fa}$ say, is thus evaluated as
\[
\widehat{P}_{fa}=\displaystyle{\frac{N_{fa}}{N_{trials}}}=
\displaystyle{\frac{\displaystyle{\sum_{k=\frac{K}{2}+1}^{N_s-\frac{K}{2}}} N_{fa}(k)}{N_{trials}}}
\]
where $N_{fa}(k)$ denotes the number of false alarms resulting when $P_c=k$, namely
when the cell under test is the $k$-th range bin.

As to the temporal steering vector, $\nu_d$ is chosen equal to $0$ Hz
in order to simulate a very challenging condition
of a possible target in deep clutter. The results are reported in Table~\ref{landCFAR} for
both HH and VV polarimetric channels.
\begin{table}[ht!]
\begin{center}
\begin{tabular}{||c|c||c|c|c||}
\hline
\hline    Receiver   & Polarization  & $K=6$ & $K=12$ & $K=16$   \\
\hline \hline
     Kelly's GLRT    & HH            & N.A.  &  1.1   & 2.3 \\
                     & VV            & N.A.  &  1.5   & 3.2 \\
\hline\hline
     AMF             & HH            & N.A.  &  1.3   & 1.9 \\
                     & VV            & N.A.  &  1.8   & 3.9 \\
\hline\hline
     RAO    	     & HH            & N.A.  &  1.8   & 2.3 \\
                     & VV            & N.A.  &  2.1   & 2.7 \\
\hline \hline
     I-GLRT          & HH            &  0.9   & 1.46  & 2.7   \\
    ($n=3$)    	     & VV            &  0.9   & 2.3   & 4.3   \\
\hline \hline
     SS-AMF          & HH            &  0.9   & 1.7   & 2.9      \\
		     & VV            &  1.0   & 3.3   & 6.0    \\
\hline\hline
     I-WALD          & HH            &  0.9   & 1.8   & 2.9      \\
     ($n=3$)         & VV            &  1.0   & 3.3   & 6.0    \\
\hline\hline
     SS-RAO          & HH            &  1.1   & 2.1   & 3.4      \\
		     & VV            &  1.2   & 3.4   & 5.0   \\
\hline\hline
\end{tabular}
\end{center}
\caption{$\widehat{P}_{fa}/10^{-4}$ for $N=8$ and three values of $K$.}
\label{landCFAR}
\end{table}
They show that for $K=6$ the I-GLRT, the SS-AMF, the I-WALD, and the SS-RAO nominally behave
in terms of $P_{fa}$, while as $K$ increases all the receivers exhibit a slight mismatch between $\widehat{P}_{fa}$
and the nominal $P_{fa}$. Notice also that the VV channel $\widehat{P}_{fa}$ is higher
than the HH one for all the considered experiments on real data. 
This could be practically justified observing that the reflectivity on the 
HH polarization is usually a few dB lower than that on the VV channel (see Chapter 5 of \cite{Richards}).

Finally, the $P_d$ curves versus the SINR obtained using the live dataset are shown 
in Figures \ref{fig:figure09} and \ref{fig:figure10}. They agree with the hierarchy observed on simulated data with
the I-GLRT providing the best performance.

\section{Conclusions}
We have devised four different decision schemes which take advantage of some spectral properties of the 
clutter usually arising in a ground clutter environment.
Specifically, at the design stage we have assumed an interference PSD real and even with the consequence that
the resulting covariance matrix is real. This seemingly minor feature reduces the number of unknowns and 
allows to recast the problem at hand in terms of 
statistically independent and real quantities that can be suitably exploited for estimation purposes. As a matter of fact,
the architectures devised under the above assumptions are capable of guaranteeing reasonable detection performances 
also in sample-starved scenarios. The applied design criteria are the Rao test, the two-step GLRT, and suboptimum
modifications of both the plain GLRT and the Wald test.
Remarkably, these suboptimum procedures rely on an alternating estimation algorithm for the target response that ensures quality guarantee.
Additionally, we have shown that the estimates provided by the above algorithm are asymptotically equivalent to the maximum likelihood estimates.
In order to prove the effectiveness of this approach, the numerical examples shown in Section IV make use of both 
simulated data and live recorded data. More precisely, we have resorted to the MIT-LL Phase-One radar dataset, 
which contains land clutter. The analysis has highlighted the superiority of the newly proposed architectures over
the conventional detectors which do not capitalize on the real and even PSD of the clutter. It is also important to remark
that the performances on live recorded data are in agreement with those obtained on simulated data.

Future research tracks might concern the extension of the proposed framework to the case of heterogeneous ground clutter, where
the interference in primary and secondary data share the same covariance structure but different power levels. Finally,
it could be of interest conceiving an automatic spectrum analyzer that is capable to establish whether or not the clutter spectrum
shares symmetry properties. Then, according to the clutter properties, this decision scheme triggers either a conventional receiver 
or a newly proposed architecture.


\appendices

\section{Derivation of (\ref{eqn:2SGLRT_00})}
\label{appendix:derivation2SGLRT}

In order to find the stationary points of
\be
g(\alpha_1,\alpha_2)=\frac{f(\bZ;\bM,\alpha_1,\alpha_2,H_1)}{f(\bZ;\bM,H_0)},
\ee
we observe that, since the natural logarithm, $\ln(\cdot)$ say, is
an increasing function of the argument, the following equality holds true:
$\arg\max_{\alpha_1,\alpha_2} g(\alpha_1,\alpha_2) = \arg\max_{\alpha_1,\alpha_2} \ln [g(\alpha_1,\alpha_2)]$,
where
\begin{multline}
\ln [g(\alpha_1,\alpha_2)]
=-\frac{1}{2}\{
( \bz_1 - \alpha_1 \bv_1 + \alpha_2\bv_2)^T\bM^{-1}
\\
\times( \bz_1 - \alpha_1 \bv_1 + \alpha_2\bv_2) 
+(\bz_2 - \alpha_1 \bv_2 - \alpha_2\bv_1)^T\bM^{-1}
\\
\times(\bz_2 - \alpha_1 \bv_2 - \alpha_2\bv_1)
\left.-\bz_1^T\bM^{-1}\bz_1-\bz_2^T\bM^{-1}\bz_2
\right\}.
\label{eqn:naturalLog2SGLRT}
\end{multline}
In addition, the test
\be
g(\alpha_1,\alpha_2)\test \eta
\ee
is statistically equivalent to
\be
\ln[g(\alpha_1,\alpha_2)]\test \eta.
\ee
Now, set to zero the gradient of $\ln [g(\alpha_1,\alpha_2)]$ to obtain the following system of equations
\be
\left\{
\begin{array}{l}
\ds\frac{\partial \ln [g(\alpha_1,\alpha_2)]}{\partial \alpha_1} = 0,
\\
\ds \frac{\partial \ln [g(\alpha_1,\alpha_2)]}{\partial \alpha_2} = 0.
\end{array}
\right.
\ee
It is not difficult to show that the solutions of the above system are
\begin{align}
\Halpha_1^{\mbox{\tiny TS}}(\bM) &= \frac{ \bv_1^T\bM^{-1}\bz_1+\bv_2^T\bM^{-1}\bz_2 }{\bv_1^T\bM^{-1}\bv_1+\bv_2^T\bM^{-1}\bv_2},
\label{eqn:TSestimateA1}
\\
\Halpha_2^{\mbox{\tiny TS}}(\bM) &= \frac{ \bv_1^T\bM^{-1}\bz_2+\bv_2^T\bM^{-1}\bz_1 }{\bv_1^T\bM^{-1}\bv_1+\bv_2^T\bM^{-1}\bv_2},
\label{eqn:TSestimateA2}
\end{align}
which replaced in (\ref{eqn:naturalLog2SGLRT}) yields
\be
\frac{1}{2}
\frac{(\bv_1^T\bM^{-1}\bz_1 + \bv_2^T\bM^{-1}\bz_2)^2 +
(\bv_1^T\bM^{-1}\bz_2 - \bv_2^T\bM^{-1}\bz_1)^2}{\bv_1^T\bM^{-1}\bv_1 + \bv_2^T\bM^{-1}\bv_2}.
\ee

\section{Proof of Proposition \ref{proposition:coerciveFunction}}
\label{appendix:coerciveFunction}
$h(\alpha_1,\alpha_2)$ can be written in terms of 
$\bz_{1w}=\bS^{-1/2}\bz_1$, $\bz_{2w}=\bS^{-1/2}\bz_2$, $\bv_{1w}=\bS^{-1/2}\bv_1$, and $\bv_{2w}=\bS^{-1/2}\bv_2$
as follows
\begin{multline}
h(\alpha_1,\alpha_2)=[1+(\bz_{1w} - \bm_{1w}(\alpha_1,\alpha_2))^T (\bz_{1w} - \bm_{1w}(\alpha_1,\alpha_2))]
\\
\times [1+(\bz_{2w} - \bm_{2w}(\alpha_1,\alpha_2))^T  (\bz_{2w} - \bm_{2w}(\alpha_1,\alpha_2))] \nonumber
\\
-[(\bz_{1w} - \bm_{1w}(\alpha_1,\alpha_2))^T  (\bz_{2w} - \bm_{2w}(\alpha_1,\alpha_2))]^2,
\end{multline}
where $\bm_{1w}(\alpha_1,\alpha_2) =\alpha_1 \bv_{1w} - \alpha_2\bv_{2w}$ and 
$\bm_{2w}(\alpha_1,\alpha_2) =\alpha_1 \bv_{2w} + \alpha_2\bv_{1w}$.
Observe that $h(\alpha_1,\alpha_2)$ is continuous and is given by the	 ratio between the determinant of two positive definite matrices. As a result, it is strictly 
positive, namely
$h(\alpha_1,\alpha_2)>0$, $\forall (\alpha_1,\alpha_2)\in \R^2$. 

Now, exploit the Schwartz inequality to show that $\forall (\alpha_1,\alpha_2)\in\R^2$
\begin{align}
& h(\alpha_1,\alpha_2) \geq [1+\| \bz_{1w} - \bm_{1w}(\alpha_1,\alpha_2) \|^2] \nonumber
\\
& \times [1+\| \bz_{2w} - \bm_{2w}(\alpha_1,\alpha_2) \|^2] \nonumber
\\
&- \| \bz_{1w} - \bm_{1w}(\alpha_1,\alpha_2) \|^2 \|\bz_{2w} - \bm_{2w}(\alpha_1,\alpha_2)\|^2 
\\
&=1+ \| \bz_{1w} - \bm_{1w}(\alpha_1,\alpha_2) \|^2 + \| \bz_{2w} - \bm_{2w}(\alpha_1,\alpha_2) \|^2 
\\
& > \| \bz_{1w} - \bm_{1w}(\alpha_1,\alpha_2) \|^2 + \| \bz_{2w} - \bm_{2w}(\alpha_1,\alpha_2) \|^2
\\
&= \|\bz_{1w}\|^2 + \|\bm_{1w}(\alpha_1,\alpha_2) \|^2 - 2 \bz_{1w}^T\bm_{1w}(\alpha_1,\alpha_2) \nonumber 
\\
&+ \|\bz_{2w}\|^2 + \|\bm_{2w}(\alpha_1,\alpha_2) \|^2 - 2 \bz_{2w}^T\bm_{2w}(\alpha_1,\alpha_2)
\label{eqn:corciveIntermed00}
\\
& > \|\bm_{1w}(\alpha_1,\alpha_2) \|^2 - 2 \bz_{1w}^T\bm_{1w}(\alpha_1,\alpha_2) \nonumber
\\
&+\|\bm_{2w}(\alpha_1,\alpha_2) \|^2 - 2 \bz_{2w}^T\bm_{2w}(\alpha_1,\alpha_2)
\label{eqn:corciveIntermed01}
\\
&=(\alpha_1^2+\alpha_2^2)(\|\bv_{1w}\|^2+\|\bv_{2w}\|^2) \nonumber
\\
& -2\alpha_1(\bz_{1w}^T\bv_{1w}+\bz_{2w}^T\bv_{2w})
+2\alpha_2(\bz_{1w}^T\bv_{2w}-\bz_{2w}^T\bv_{1w}),
\end{align}
where the inequality between (\ref{eqn:corciveIntermed00}) and (\ref{eqn:corciveIntermed01}) is due to the fact that 
$\|\bz_{iw}\|^2>0$, $i=1,2$.

As next step, let us define
$x(\alpha_1) = \alpha_1^2(\|\bv_{1w}\|^2+\|\bv_{2w}\|^2)-2\alpha_1(\bz_{1w}^T\bv_{1w}+\bz_{2w}^T\bv_{2w})$ and
$y(\alpha_2) = \alpha_2^2(\|\bv_{1w}\|^2+\|\bv_{2w}\|^2)+2\alpha_2(\bz_{1w}^T\bv_{2w}-\bz_{2w}^T\bv_{1w})$,
and observe that $h(\alpha_1,\alpha_2)>x(\alpha_1)+y(\alpha_2)$.
It is clear that
$\lim_{\alpha_1\rightarrow \pm\infty} x(\alpha_1) = +\infty$ and $\lim_{\alpha_2\rightarrow \pm\infty} y(\alpha_2) = +\infty$.
Gathering the above results yields
$\lim_{\|\balpha\|\rightarrow +\infty} h(\alpha_1,\alpha_2)=+\infty$,
where $\balpha=[\alpha_1 \ \alpha_2]^T$. Thus, by definition, $h(\alpha_1,\alpha_2)$ is a coercive or radially unbounded function.

\section{Expressions of the Coefficients for equations (\ref{eqn:thirdDegree}) and (\ref{eqn:cubic2})}
\label{appendix:coefficients}
The coefficients of equation (\ref{eqn:thirdDegree}) are give by
\begin{align}
b_1 &= a_1 a_3 + a_9 a_6 - 2 a_{11} a_{14}, \nonumber
\\
b_2 &= a_1 a_4 + a_2 a_3 + a_6 a_{10} + a_9 a_7 - 2 a_{11} a_{15} - 2 a_{12} a_{14}, \nonumber
\\
b_3 &= a_1 a_5 + a_2 a_4 + a_7 a_{10} + a_9 a_8 - 2 a_{12} a_{15} - 2 a_{13} a_{14}, \nonumber
\\
b_4 &= a_2 a_5 + a_8 a_{10} - 2 a_{13} a_{15},
\end{align}
where
\begin{align}
& a_1 = 2 \bv_1^T \bS^{-1} \bv_1, \ a_2 = -2 {\Halpha}_2^{(0)} \bv_1^T \bS^{-1} \bv_2 - 2 \bv_1^T \bS^{-1} \bz_1, \nonumber
\\
& a_3 = \bv_2^T \bS^{-1} \bv_2, \ a_4 = -2  \bv_2^T \bS^{-1} \bz_2  + 2 {\Halpha}_2^{(0)}  \bv_2^T \bS^{-1} \bv_1, \nonumber
\\
& a_5 = ({\Halpha}_2^{(0)})^2 \bv_1^T \bS^{-1} \bv_1 - 2 {\Halpha}_2^{(0)}  \bv_1^T \bS^{-1} \bz_2  +   \bz_2^T \bS^{-1} \bz_2  + 1, \nonumber
\\
& a_6 = a_1/2, \ a_7 = -2  \bv_1^T \bS^{-1} \bz_1  - 2 {\Halpha}_2^{(0)}  \bv_2^T \bS^{-1} \bv_1, \nonumber
\\
& a_8 = ({\Halpha}_2^{(0)})^2   \bv_2^T \bS^{-1} \bv_2 + 2 {\Halpha}_2^{(0)}  \bv_2^T \bS^{-1} \bz_1 +  \bz_1^T \bS^{-1} \bz_1  + 1, \nonumber
\\
& a_9 = 2 \bv_2^T \bS^{-1} \bv_2, \ a_{10} = 2 {\Halpha}_2^{(0)}  \bv_1^T \bS^{-1} \bv_2 - 2  \bv_2^T \bS^{-1} \bz_2, \ \nonumber
\\
& a_{11} =  \bv_1^T \bS^{-1} \bv_2, \nonumber
\\
& a_{12} = -\bz_1^T \bS^{-1} \bv_2 -  \bv_1^T \bS^{-1} \bz_2  \nonumber
\\
&+ {\Halpha}_2^{(0)} (\bv_1^T \bS^{-1} \bv_1-\bv_2^T \bS^{-1} \bv_2), \nonumber
\\
& a_{13} = \bz_1^T \bS^{-1} \bz_2  + {\Halpha}_2^{(0)} (\bv_2^T \bS^{-1} \bz_2 - \bz_1^T \bS^{-1} \bv_1)  \nonumber
\\
&-({\Halpha}_2^{(0)})^2  \bz_2^T \bS^{-1} \bv_1, \nonumber
\\
& a_{14} = 2 \bv_1^T \bS^{-1} \bv_2, \ a_{15} = -\bv_1^T \bS^{-1} \bz_2 + {\Halpha}_2^{(0)}  \bv_1^T \bS^{-1} \bv_1 \nonumber
\end{align}
\begin{align}
& -\bv_2^T \bS^{-1} \bz_1  - {\Halpha}_2^{(0)}  \bv_2^T \bS^{-1} \bv_2.
\end{align}

The calculations to obtain the real coefficients of equation (\ref{eqn:cubic2}) are analogous to those used for the coefficients of (\ref{eqn:thirdDegree})
and lead to
\begin{align}
d_1 &= c_1 c_3 + c_9 c_6 - 2 c_{11} c_{14},
\\
d_2 &= c_1 c_4 + c_2 c_3 + c_6 c_{10} + c_9 c_7 - 2 c_{11} c_{15} - 2 c_{12} c_{14},
\\
d_3 &= c_1 c_5 + c_2 c_4 + c_7 c_{10} + c_9 c_8 - 2 c_{12} c_{15} - 2 c_{13} c_{14},
\\
d_4 &= c_2 c_5 + c_8 c_{10} - 2 c_{13} c_{15},
\end{align}
where
\begin{align}
& c_1 = 2 \bv_2^T \bS^{-1} \bv_2, \ c_2 = -2 {\Halpha}_1^{(1)} \bv_1^T \bS^{-1} \bv_2 + 2 \bv_2^T \bS^{-1} \bz_1, \nonumber
\\
& c_3 = \bv_1^T \bS^{-1} \bv_1, \ c_4 = -2  \bv_1^T \bS^{-1} \bz_2  + 2 {\Halpha}_1^{(1)}  \bv_2^T \bS^{-1} \bv_1, \nonumber
\\
& c_5 = ({\Halpha}_1^{(1)})^2 \bv_2^T \bS^{-1} \bv_2 - 2 {\Halpha}_1^{(1)}  \bv_2^T \bS^{-1} \bz_2  +   \bz_2^T \bS^{-1} \bz_2  + 1, \nonumber
\\
& c_6 = c_1/2, \ c_7 = 2  \bv_2^T \bS^{-1} \bz_1  - 2 {\Halpha}_1^{(1)}  \bv_2^T \bS^{-1} \bv_1, \nonumber
\\
& c_8 = ({\Halpha}_1^{(1)})^2   \bv_1^T \bS^{-1} \bv_1 - 2 {\Halpha}_1^{(1)}  \bv_1^T \bS^{-1} \bz_1 +  \bz_1^T \bS^{-1} \bz_1  + 1, \nonumber
\\ 
& c_9 = 2 \bv_1^T \bS^{-1} \bv_1, \ c_{10} = 2 {\Halpha}_1^{(1)}  \bv_1^T \bS^{-1} \bv_2 - 2  \bv_1^T \bS^{-1} \bz_2, \ \nonumber
\\
& c_{11} = -\bv_1^T \bS^{-1} \bv_2, \ c_{12} = -\bz_1^T \bS^{-1} \bv_1 \nonumber
\\
& +  \bv_2^T \bS^{-1} \bz_2  + {\Halpha}_1^{(1)} (\bv_1^T \bS^{-1} \bv_1-\bv_2^T \bS^{-1} \bv_2), \nonumber
\\
& c_{13} = \bz_1^T \bS^{-1} \bz_2  - {\Halpha}_1^{(1)} (\bv_2^T \bS^{-1} \bz_1 - \bz_2^T \bS^{-1} \bv_1) \nonumber
\\
&+ ({\Halpha}_1^{(1)})^2  \bv_2^T \bS^{-1} \bv_1, \ c_{14} = -2 \bv_1^T \bS^{-1} \bv_2, \nonumber
\\
& \ c_{15} = \bv_2^T \bS^{-1} \bz_2 - {\Halpha}_1^{(1)}  \bv_2^T \bS^{-1} \bv_2
-   \bv_1^T \bS^{-1} \bz_1 + {\Halpha}_1^{(1)}  \bv_1^T \bS^{-1} \bv_1.
\end{align}

\section{Proof of Proposition \ref{proposition:convergence}}
\label{appendix:convergence}
Our goal is to compute a stationary point of
\begin{multline}
h(\alpha_1,\alpha_2)=[1+(\bz_{1} - \bm_{1}(\alpha_1,\alpha_2))^T \bS^{-1} (\bz_{1} - \bm_{1}(\alpha_1,\alpha_2))]
\\
\times [1+(\bz_{2} - \bm_{2}(\alpha_1,\alpha_2))^T \bS^{-1} (\bz_{2} - \bm_{2}(\alpha_1,\alpha_2))] \nonumber
\\
-[(\bz_{1} - \bm_{1}(\alpha_1,\alpha_2))^T \bS^{-1} (\bz_{2} - \bm_{2}(\alpha_1,\alpha_2))]^2,
\end{multline}
which is a continuous, coercive, and strictly 
positive function of $(\alpha_1, \alpha_2)\in\R^2$. It follows that it admits the global minimum and the gradient of $h(\alpha_1,\alpha_2)$
evaluated at this point is zero.

According to Algorithm \ref{algorithm:AlgorithmPP} and assuming that $\Halpha_2^{(0)}$ is known, it is clear that
(by construction) the estimates $\Halpha_i^{(m)}$, $i=1,2$, $m\in \N$, fulfill the following inequality chain:
$h(\Halpha_1^{(1)},\Halpha_2^{(1)}) \geq h(\Halpha_1^{(2)},\Halpha_2^{(2)}) 
\geq \ldots \geq h(\Halpha_1^{(m)},\Halpha_2^{(m)}) \geq h(\Halpha_1^{(m+1)},\Halpha_2^{(m+1)}) \geq \ldots$.
Thus, we have obtained a non-negative (or lower bounded) and decreasing sequence
$g_m = h(\Halpha_1^{(m)},\Halpha_2^{(m)}), \quad m\in \N$, which is thus convergent to a finite limit, i.e.,
\be
\lim_{m\rightarrow +\infty} g_m = \inf_{m\in\N} g_m = g <+\infty.
\label{eqn:limiteSuccessione}
\ee
Due to the continuity of $h(\alpha_1,\alpha_2)$ in $\R^2$, it is possible to build a set $\Omega$ defined as follows 
$\Omega =\left\{ a=(\Halpha_1,\Halpha_2)\in\R^2: \ h(\Halpha_1,\Halpha_2)=g \right\}\subseteq \R^2$
and, in addition, by the Weierstrass theorem on bounded sequences \cite{Rudin} and (\ref{eqn:limiteSuccessione}), 
there exists a convergent subsequence of $a_m$
that must have limit belonging to $\Omega$. As a matter of fact, denote by $a_{m_k}$, $k\in\N$, the convergent subsequence extracted from  $a_m$ and
consider the subsequence, $g_{m_k}$ say, of $g_m$ induced by $a_{m_k}$. Now, since $g_m$ is regular, then
every subsequence extracted from it is regular and converges to the same limit.
As a consequence, it is possible to extract from $a_m$ a subsequence $a_{m_k}$, $k\in\N$, converging to $a\in\Omega$.

As a final step, we prove that $a=(\Halpha_1,\Halpha_2)$ is a stationary point for $h(\alpha_1,\alpha_2)$.
To this end, denote by $\Halpha_2^*$ the minimum of $h(\Halpha_1, \alpha_2)$, namely
\be
\Halpha_2^* = \arg\min_{\alpha_2} h(\Halpha_1, \alpha_2)
\label{eqn:minimoFa}
\ee
and observe that, since $h(\alpha_1,\alpha_2)$ is coercive,
$|\Halpha_2^*|<+\infty$.
Now, recall that $a_{m_k}=(\Halpha_1^{(m_k)},\Halpha_2^{(m_k)})$ is a convergent subsequence extracted from $a_m$ and
apply Algorithm \ref{algorithm:AlgorithmPP} assuming  $(\Halpha_1^{(m_k)},\Halpha_2^*)$ as initial point, then we obtain that
$h(\Halpha_1^{(m_k)},\Halpha_2^*)\geq h(\Halpha_1^{(m_k)},\Halpha_2^{(m_k)}) \geq h(\Halpha_1^{(m_k+1)},\Halpha_2^{(m_k+1)})$.
Since $g_n$ is decreasing, it follows that $g_{m_k+1}\geq g_{m_{k+1}}$ and, hence, the above chain of inequalities continues as follows
$h(\Halpha_1^{(m_k)},\Halpha_2^*)\geq h(\Halpha_1^{(m_k+1)},\Halpha_2^{(m_k+1)}) \geq h(\Halpha_1^{(m_{k+1})},\Halpha_2^{(m_{k+1})})$,
which, for $k\rightarrow +\infty$, becomes
$h(\Halpha_1,\Halpha_2^*)\geq h(\Halpha_1,\Halpha_2)$.
Exploiting (\ref{eqn:minimoFa}) in conjunction with the last equation yields
$h(\Halpha_1, \alpha_2) \geq h(\Halpha_1,\Halpha_2), \quad \forall \alpha_2 \in \R$.
Moreover, since $h(\alpha_1,\alpha_2)$ is continuous and differentiable with respect to $\alpha_2$, it is clear that
\begin{multline}
\left[ \frac{\partial}{\partial \alpha_2} h(\Halpha_1,\alpha_2) \right]_{\!\alpha_2=\Halpha_2} \!\!\!\!\!=0 
\Rightarrow 
\left[ \frac{\partial}{\partial \alpha_2} h(\alpha_1,\alpha_2) \right]_{\!\!\!\!\scriptsize \begin{array}{l}
									 \alpha_1=\Halpha_1
									  \\
									  \vspace{-3mm}
									  \\
                                                                          \alpha_2=\Halpha_2
                                                                        \end{array}}\!\!\!\!\!\!\!=0.
                                                                        \label{eqn:partAlpha1}
\end{multline}
Following the same line of reasoning, it is not difficult to show that
\begin{multline}
\left[ \frac{\partial}{\partial \alpha_1} h(\alpha_1,\Halpha_2) \right]_{\! \alpha_1=\Halpha_1}\!\!\!\!\!\!\!=0
\Rightarrow
\left[ \frac{\partial}{\partial \alpha_1} h(\alpha_1,\alpha_2) \right]_{\!\!\!\!\scriptsize \begin{array}{l}
									\alpha_1=\Halpha_1
									  \\
									  \vspace{-3mm}
									  \\
                                                                         \alpha_2=\Halpha_2
                                                                        \end{array}}\!\!\!\!\!\!=0.
                                                                        \label{eqn:partAlpha2}
\end{multline}
Notice that equations (\ref{eqn:partAlpha1}) and (\ref{eqn:partAlpha2}) can be written in a more compact form using the gradient operator
\be
\nabla h(\Halpha_1,\Halpha_2)=0,
\ee
which implies that $a=(\Halpha_1,\Halpha_2)$ is a stationary point of $h(\alpha_1,\alpha_2)$.

\section{Derivation of the Rao test}
\label{appendix:derivationRao}
As first step, we denote by $s(\alpha_1,\alpha_2,\bM)$ the natural logarithm of the pdf of $\bZ$ and $\bZ_s$ under $H_1$,
namely
\begin{multline}
s(\alpha_1,\alpha_2,\bM)=-N\ln(2\pi)-\ln[\det(\bM)]
\\
-\frac{1}{2}(\bz_1 - \bm_1(\alpha_1,\alpha_2))^T\bM^{-1}(\bz_1 - \bm_1(\alpha_1,\alpha_2))
\\
-\frac{1}{2}(\bz_2 - \bm_2(\alpha_1,\alpha_2))^T\bM^{-1}(\bz_2 - \bm_2(\alpha_1,\alpha_2))
\\
+\ln[f(\bZ_S;\bM)].
\end{multline}
It is not difficult to show that
\begin{align}
& \frac{\partial}{\partial \alpha_1}s(\alpha_1,\alpha_2,\bM)=
\bv_1^T\bM^{-1}( \bz_1 - \alpha_1 \bv_1 + \alpha_2\bv_2) \nonumber
\\
&+\bv_2^T\bM^{-1}( \bz_2 - \alpha_1 \bv_2 - \alpha_2\bv_1),
\label{eqn:partialA1}
\end{align}
\begin{align}
& \frac{\partial}{\partial \alpha_2}s(\alpha_1,\alpha_2,\bM)=
-\bv_2^T\bM^{-1}( \bz_1 - \alpha_1 \bv_1 + \alpha_2\bv_2) \nonumber
\\
&+\bv_1^T\bM^{-1}( \bz_2 - \alpha_1 \bv_2 - \alpha_2\bv_1).
\label{eqn:partialA2}
\end{align}
Thus, we can write
\begin{multline}
\left\{\frac{\partial}{\partial \btheta_A}\big[\ln\big(f(\bZ;\bM,\alpha_1,\alpha_2,H_1)f(\bZ_S;\bM)\big)\big]\right\}_{\btheta=\btheta_0}
\\
=\left[
\begin{array}{c}
\bv_1^T\bS_0^{-1}\bz_1 + \bv_2^T\bS_0^{-1}\bz_2
\\
-\bv_2^T\bS_0^{-1}\bz_1+\bv_1^T\bS_0^{-1}\bz_2
\end{array}
\right].
\label{eqn:scoreDer}
\end{multline}
Now, let us focus on the Fisher information matrix and exploit (\ref{eqn:partialA1}) and (\ref{eqn:partialA2}) to evaluate
\begin{multline}
\bF_{AA}(\btheta)=-E\left[
\frac{\partial^2}{\partial \btheta_A \partial \btheta^T_A}s(\alpha_1,\alpha_2,\bM)
\right]
\\
=\left[
\begin{array}{cc}
\bv_1^T\bM^{-1}\bv_1 + \bv_2^T\bM^{-1}\bv_2 & \bzero
\\
\bzero & \bv_1^T\bM^{-1}\bv_1 + \bv_2^T\bM^{-1}\bv_2
\end{array}
\right].
\label{eqn:FAA}
\end{multline}
As a final step towards the derivation of the Rao test, we only need to notice that
\begin{align}
&\bF_{AB}(\btheta)=-E\left[
\frac{\partial^2}{\partial \btheta_A \partial \btheta^T_B}[s(\alpha_1,\alpha_2,\bM)]
\right]
\\
&=-E\left[
\begin{array}{c}
\ds\frac{\partial}{\partial \bff^T(\bM)} [\bv_1^T\bM^{-1}(\bz_1-\bm_1(\alpha_1,\alpha_2))
\\
\vspace{-3mm}
\\
\ds\frac{\partial}{\partial \bff^T(\bM)} [-\bv_2^T\bM^{-1}(\bz_1-\bm_1(\alpha_1,\alpha_2))
\end{array}
\right.\nonumber
\\
&\left.
\begin{array}{c}
\ds \phantom{\ds\frac{\partial}{\partial \bff^T(\bM)}} +\bv_2^T\bM^{-1}(\bz_2-\bm_2(\alpha_1,\alpha_2))]
\\
\vspace{-3mm}
\\
\ds \phantom{\ds\frac{\partial}{\partial \bff^T(\bM)}} +\bv_1^T\bM^{-1}(\bz_2-\bm_2(\alpha_1,\alpha_2))]
\end{array}
\right]
\\
&=-E\left[
\begin{array}{c}
\ds \bff^T\left\{\frac{\partial}{\partial \bM} [\bv_1^T\bM^{-1}(\bz_1-\bm_1(\alpha_1,\alpha_2))
\right.
\\
\vspace{-3mm}
\\
\ds\bff^T\left\{\frac{\partial}{\partial \bM} [-\bv_2^T\bM^{-1}(\bz_1-\bm_1(\alpha_1,\alpha_2))
\right.
\end{array}
\right.\nonumber
\\
& \left. 
\begin{array}{c}
\ds \left. \phantom{\frac{\partial}{\partial \bM}} +\bv_2^T\bM^{-1}(\bz_2-\bm_2(\alpha_1,\alpha_2))]
\right\}
\\
\vspace{-3mm}
\\
\ds \left. \phantom{\frac{\partial}{\partial \bM}} +\bv_1^T\bM^{-1}(\bz_2-\bm_2(\alpha_1,\alpha_2))]
\right\}
\end{array}
\right]
\\
&=-\left[\begin{array}{c}
\bff^T \{E[-\bM^{-1}\bv_1(\bz_1-\bm_1(\alpha_1,\alpha_2))^T\bM^{-1} 
\\
\bff^T\{E[\bM^{-1}\bv_2(\bz_1-\bm_1(\alpha_1,\alpha_2))^T\bM^{-1} 
\end{array}
\right. \nonumber
\\
&\left.\begin{array}{c}
-\bM^{-1}\bv_2(\bz_2-\bm_2(\alpha_1,\alpha_2))^T\bM^{-1}] \}
\\
- \bM^{-1}\bv_1(\bz_2-\bm_2(\alpha_1,\alpha_2))^T\bM^{-1}] \}
\end{array}
\right]
\\
&=\bzero.
\end{align}
As a consequence, (\ref{eqn:FAAmeno1}) simplifies as
\be
[\bF(\btheta)^{-1}]_{AA} = [\bF_{AA}(\btheta)]^{-1}
\ee
and, hence,
\begin{multline}
[\bF(\btheta_0)^{-1}]_{AA}
\\
=\left[
\begin{array}{cc}
\ds\frac{1}{\bv_1^T\bS_0^{-1}\bv_1 + \bv_2^T\bS_0^{-1}\bv_2} & \ds\bzero
\\
\ds\bzero & \ds\frac{1}{\bv_1^T\bS_0^{-1}\bv_1 + \bv_2^T\bS_0^{-1}\bv_2}
\end{array}
\right].
\end{multline}
Using the above equation in conjunction with (\ref{eqn:scoreDer}) leads to
the final expression of the Rao test statistic
\be
\tRAO = \frac{t_1(\bS_0)+t_2(\bS_0)}
{\bv_1^T\bS_0^{-1}\bv_1 + \bv_2^T\bS_0^{-1}\bv_2}.
\ee

%

%
%
\begin{figure}[ht!]
\begin{center}
\includegraphics[scale=0.45]{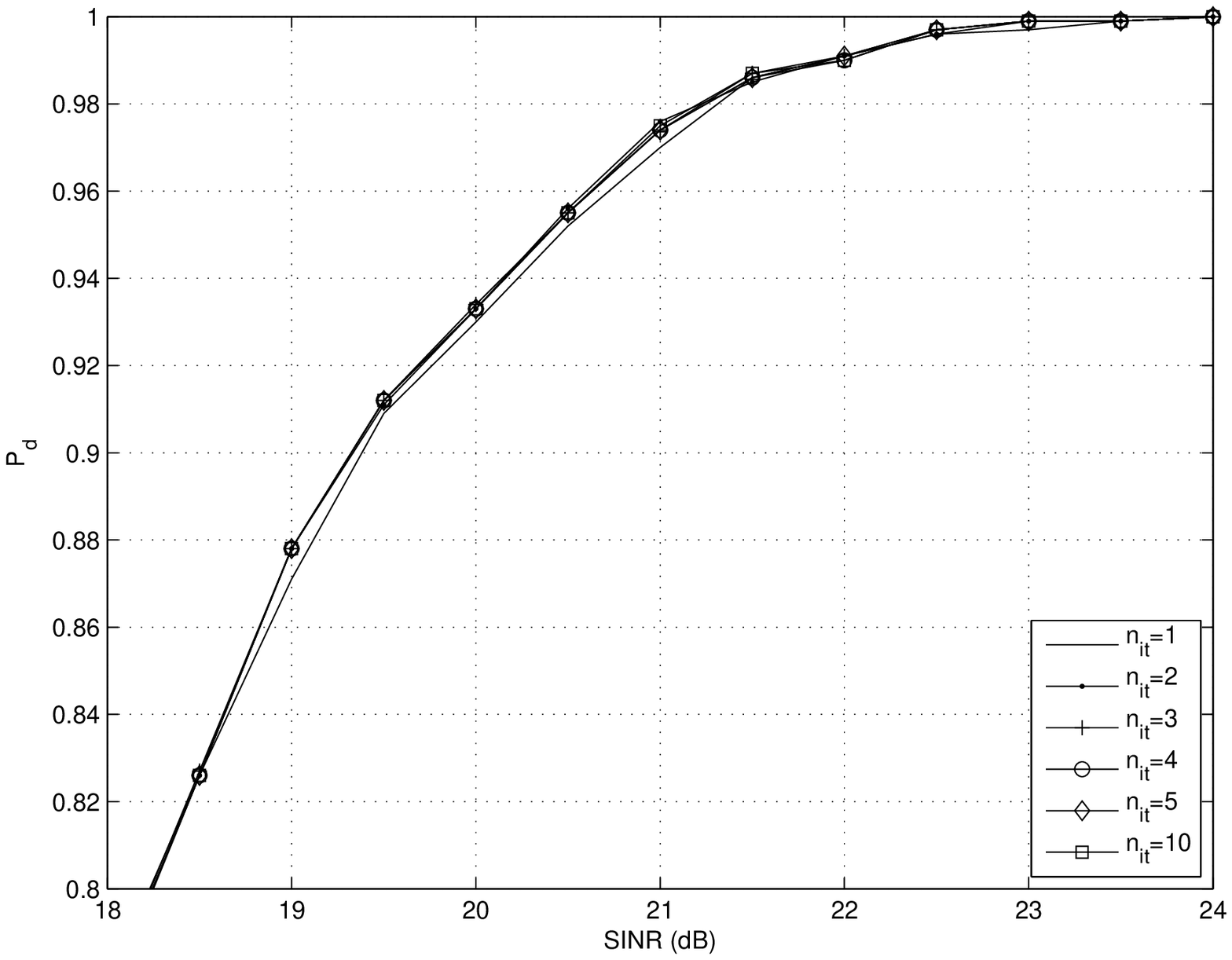}
\caption{$P_d$ versus SINR for the I-GLRT assuming $N=8$, $K=6$, $\nu_d=0.1$, and $n$ as parameter.}
\label{fig:figure01}
\end{center}
\end{figure}
%
%
\begin{figure}[ht!]
\begin{center}
\includegraphics[scale=0.45]{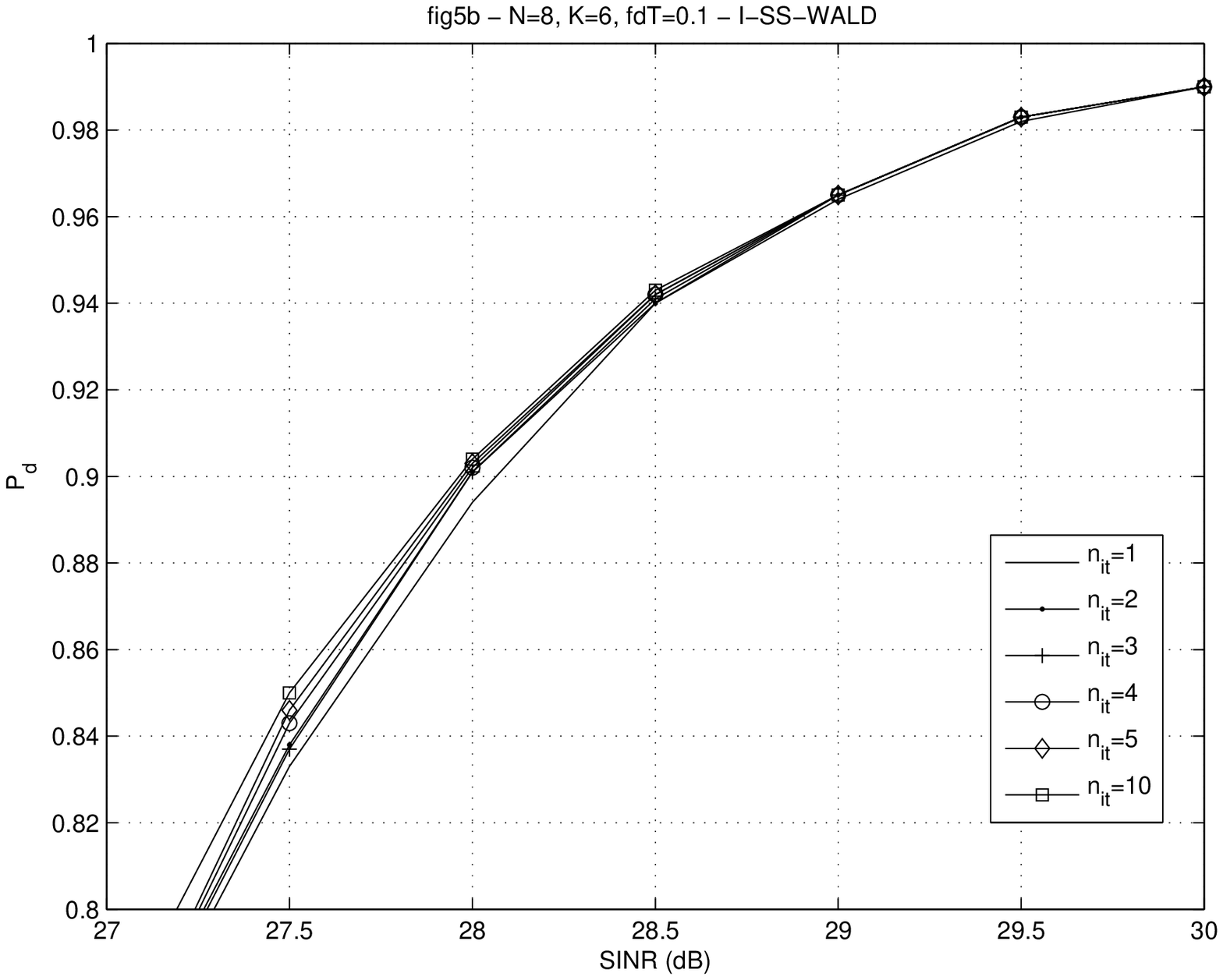}
\caption{$P_d$ versus SINR for the I-WALD assuming $N=8$, $K=6$, $\nu_d=0.1$, and $n$ as parameter.}
\label{fig:figure02}
\end{center}
\end{figure}
\begin{figure}[ht!]
\begin{center}
\includegraphics[width=9cm,height=6cm]{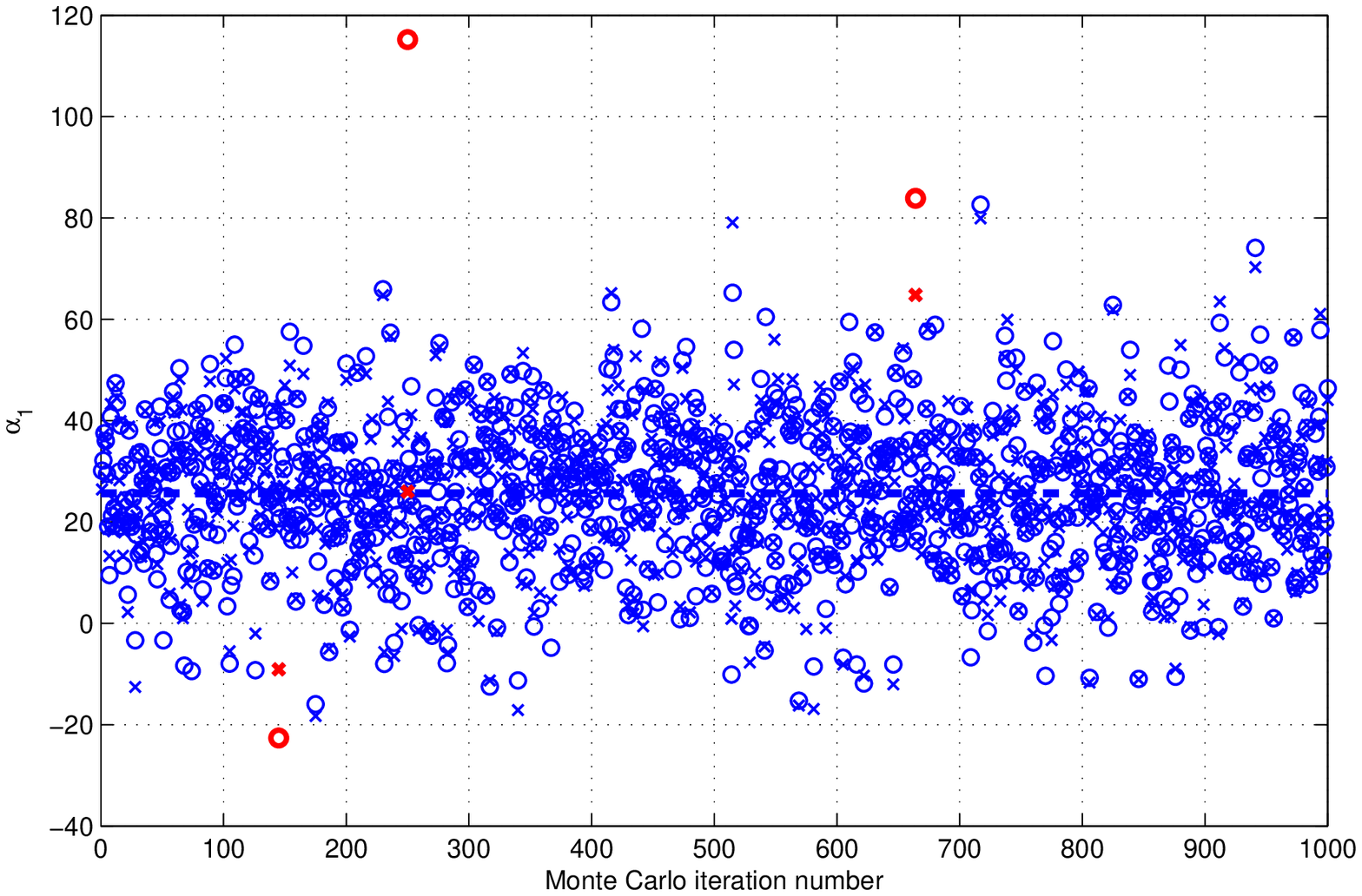}
\caption{Estimate of $\alpha_1$ versus Monte Carlo iteration number; estimates provided by
Algorithm 1 (cross marker and no line) and the two-step design procedure (circle marker and no line). 
In addition, the actual value of $\alpha_1$ is also plotted (no marker and dashed line).}
\label{fig:figure03}
\end{center}
\end{figure}
\begin{figure}[ht!]
\begin{center}
\includegraphics[width=9cm,height=6cm]{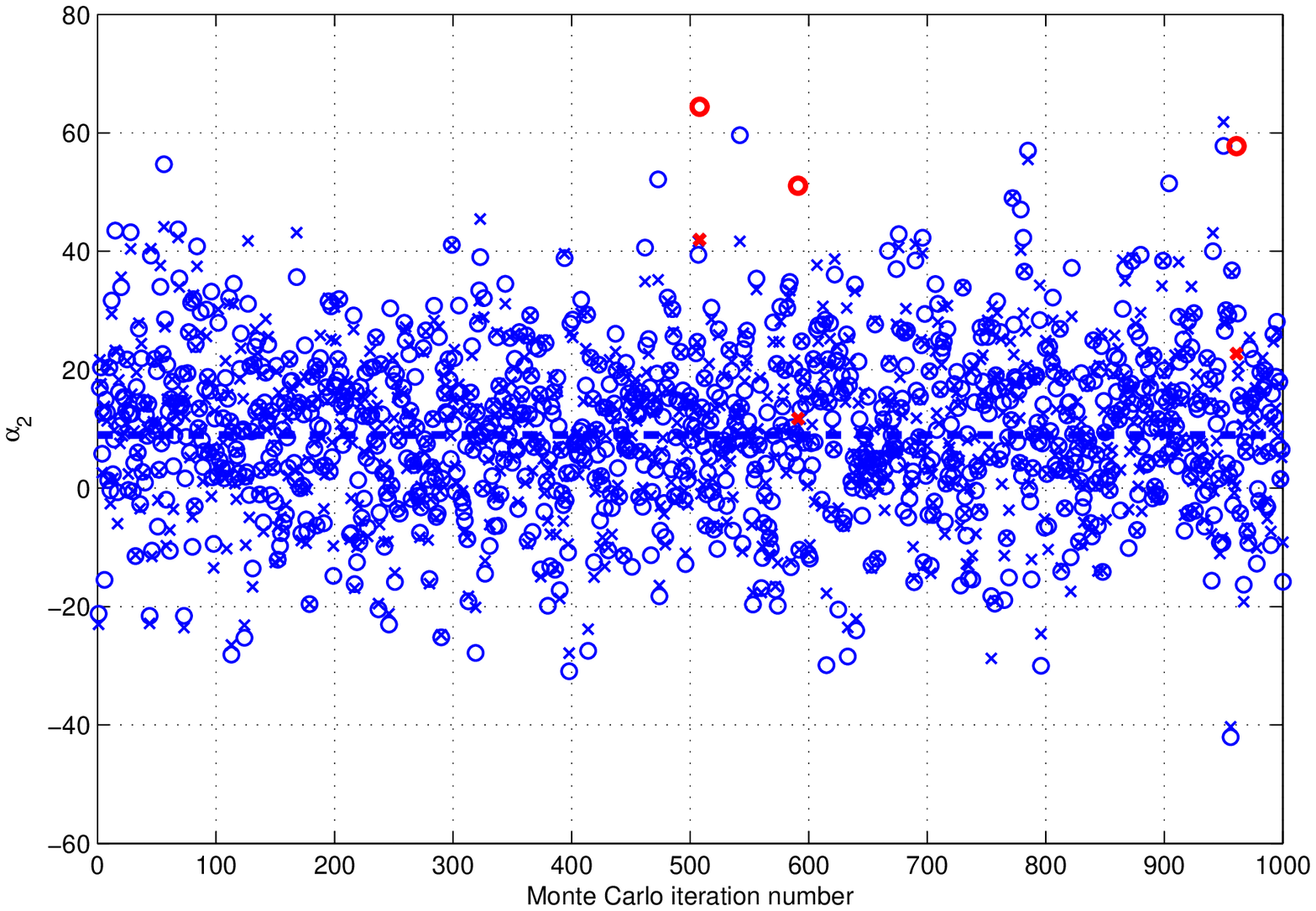}
\caption{Estimate of $\alpha_2$ versus Monte Carlo iteration number; estimates provided by
Algorithm 1 (cross marker and no line) and the two-step design procedure (circle marker and no line). 
In addition, the actual value of $\alpha_2$ is also plotted (no marker and dashed line).}
\label{fig:figure03a}
\end{center}
\end{figure}
\begin{figure}[ht!]
\begin{center}
\includegraphics[scale=0.45]{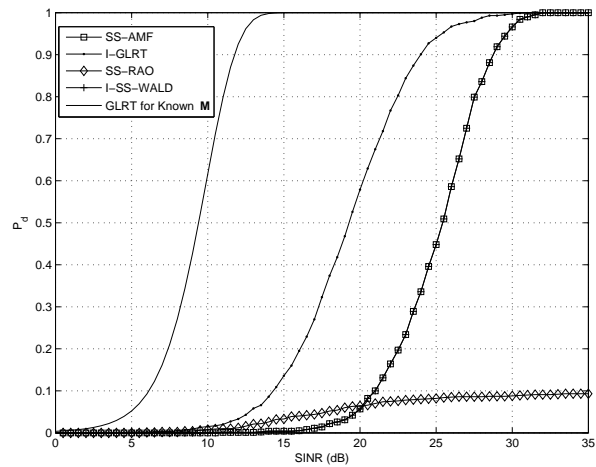}
\caption{$P_d$ versus SINR for the SS-AMF, the I-GLRT, the SS-RAO, and the I-WALD assuming $N=8$, $K=6$, $\nu_d=0$, and $n=3$.}
\label{fig:figure04}
\end{center}
\end{figure}
\begin{figure}[ht!]
\begin{center}
\includegraphics[scale=0.45]{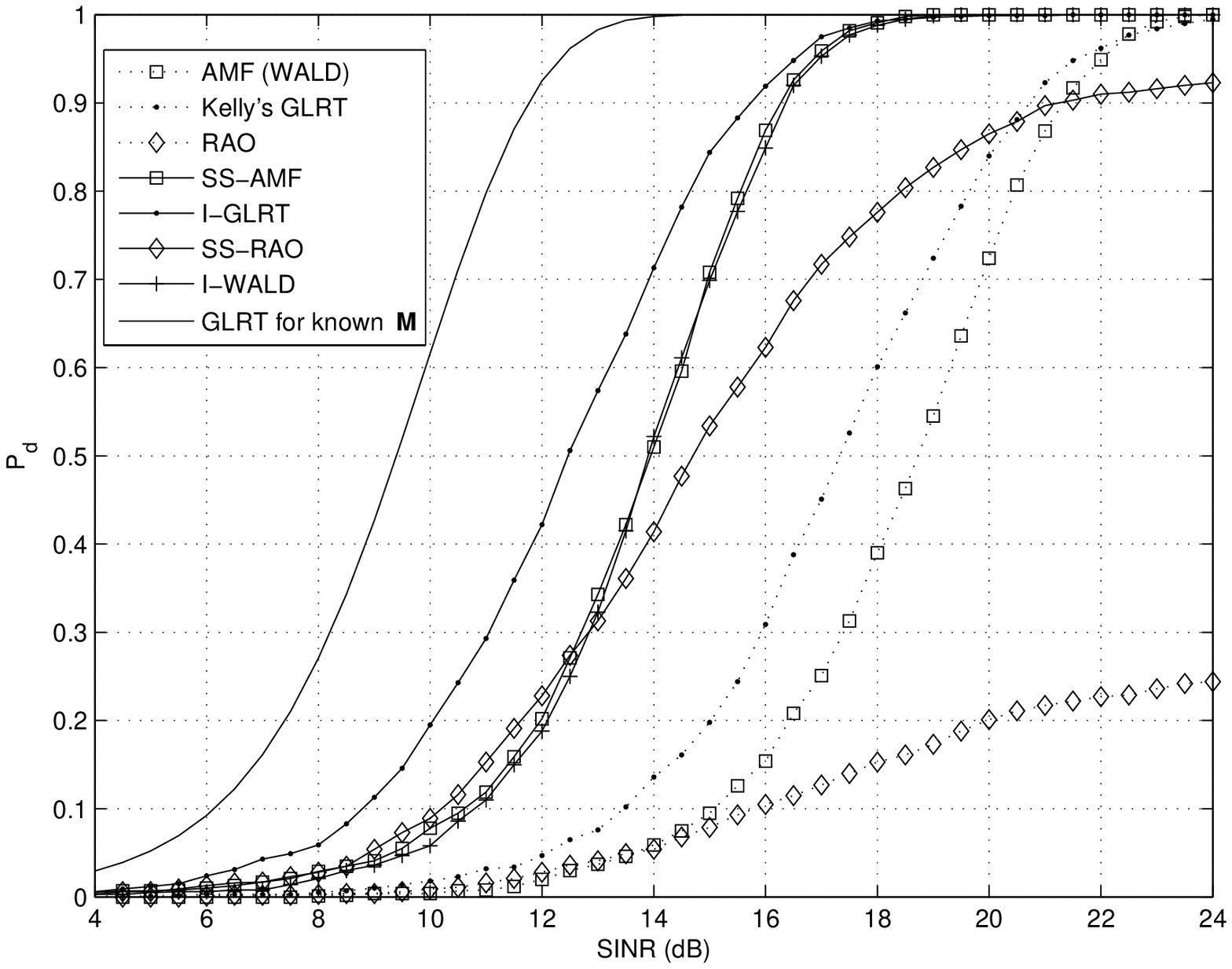}
\caption{$P_d$ versus SINR for the SS-AMF, the I-GLRT, the SS-RAO, the I-WALD, Kelly's GLRT, the AMF, and the RAO assuming $N=8$, $K=12$, $\nu_d=0$, and $n=3$.}
\label{fig:figure05}
\end{center}
\end{figure}
\begin{figure}[ht!]
\begin{center}
\includegraphics[scale=0.45]{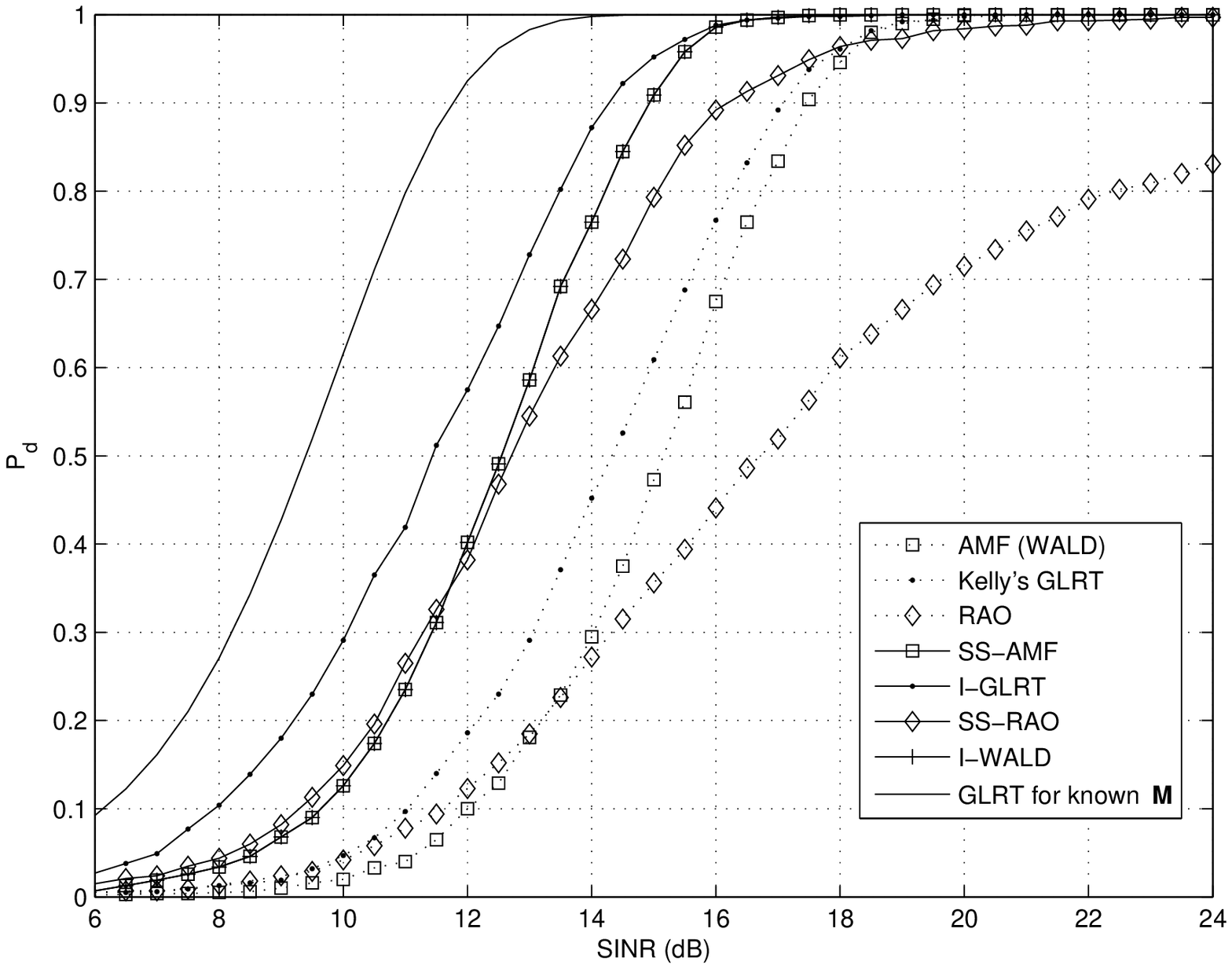}
\caption{$P_d$ versus SINR for the SS-AMF, the I-GLRT, the SS-RAO, the I-WALD, Kelly's GLRT, the AMF, and the RAO assuming $N=8$, $K=16$, $\nu_d=0$, and $n=3$.}
\label{fig:figure06}
\end{center}
\end{figure}
\begin{figure}[ht!]
\begin{center}
\includegraphics[scale=0.45]{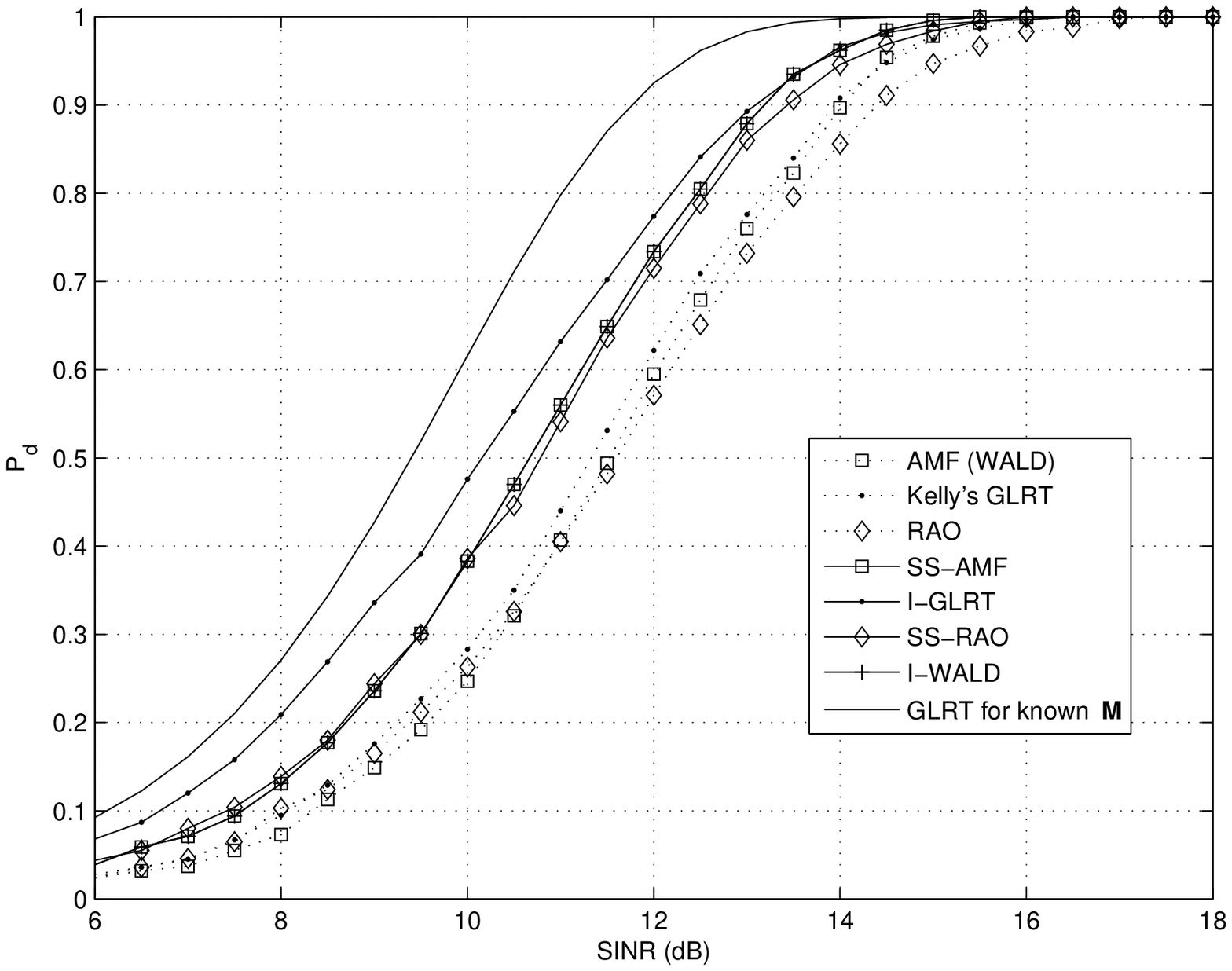}
\caption{$P_d$ versus SINR for the SS-AMF, the I-GLRT, the SS-RAO, the I-WALD, Kelly's GLRT, the AMF, and the RAO assuming $N=8$, $K=32$, $\nu_d=0$, and $n=3$.}
\label{fig:figure07}
\end{center}
\end{figure}
\begin{figure}
\centering
\subfigure[HH Polarization]
{\includegraphics[scale=0.45]{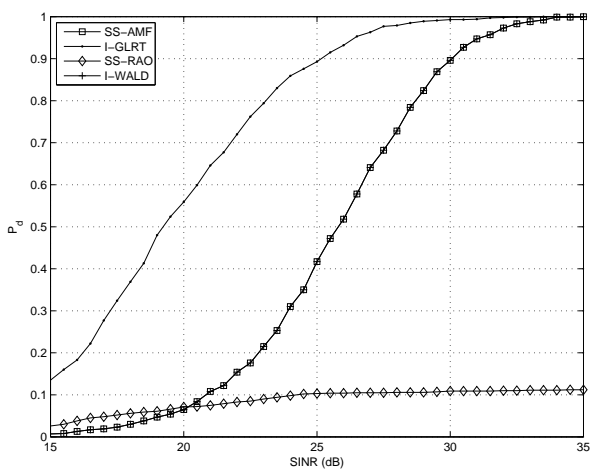}}
\hspace{5mm}
\subfigure[VV Polarization]
{\includegraphics[scale=0.45]{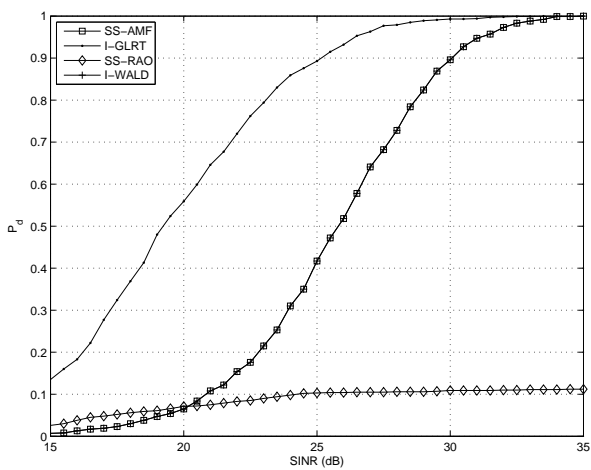}}
\caption{$P_d$ versus SINR for the SS-AMF, the I-GLRT, the SS-RAO, and the I-WALD assuming $N=8$, $K=6$, $\nu_d=0$, $n=3$;
performance are evaluated on MIT-LL Phase-One radar dataset.}
\label{fig:figure09}
\end{figure}
\begin{figure}[ht!]
\begin{center}
\includegraphics[scale=0.45]{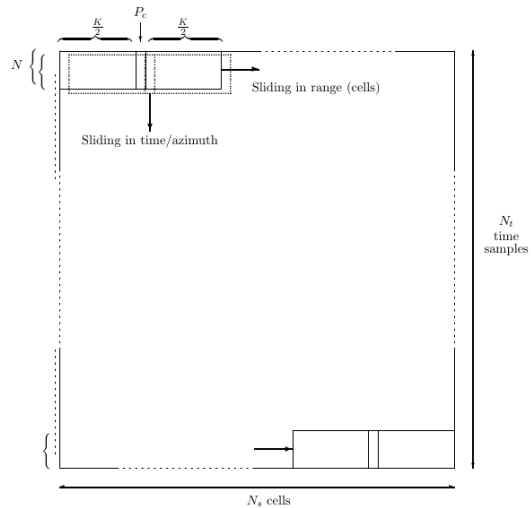}
\caption{Data selection procedure for the CFAR analysis.}
\label{fig:figure08}
\end{center}
\end{figure}
\begin{figure}
\centering
\subfigure[HH Polarization]
{\includegraphics[scale=0.45]{}}
\hspace{5mm}
\subfigure[VV Polarization]
{\includegraphics[scale=0.45]{}}
\caption{$P_d$ versus SINR for the SS-AMF, the I-GLRT, the SS-RAO, the I-WALD, Kelly's GLRT, the AMF, and the RAO assuming $N=8$, $K=12$, $\nu_d=0$, $n=3$;
performance are evaluated on MIT-LL Phase-One radar dataset.}
\label{fig:figure10}
\end{figure}

\begin{thebibliography}{99}
\bibitem{Kelly86}
E. J. Kelly, ``An Adaptive Detection Algorithm,''
{\em IEEE Transactions on Aerospace and Electronic Systems}, Vol. 22, No. 2, pp. 115-127, March 1986.
\bibitem{Kelly-Nitzberg}
F. C. Robey, D. L. Fuhrman, E. J. Kelly, and R. Nitzberg, ``A CFAR Adaptive Matched Filter Detector,''
{\em IEEE Transactions on Aerospace and Electronic Systems}, Vol. 29, No. 1, pp. 208-216, January 1992.
\bibitem{Kelly-TR}
E. J. Kelly, K. Forsythe, ``Adaptive Detection and Parameter Estimation for Multidimensional Signal Models,''
Lincoln Lab, MIT, Lexington, Tech. Rep. No. 848, April 19, 1989.
\bibitem{Melvin-2000} W. L. Melvin, ``Space-time Adaptive Radar Performance in Heterogeneous Clutter,''
{\em IEEE Transactions on Aerospace and Electronic Systems}, Vol. 36, No. 2, pp. 621-633, April 2000.
\bibitem{gini1}
F. Gini and A. Farina, ``Vector Subspace Detection in Compound-Gaussian Clutter Part I: Survey and New Results,''
{\em IEEE Transactions on Aerospace and Electronics Systems}, Vol. 38, No. 4, pp. 1295-1311, October 2002.
\bibitem{gini2}
F. Gini and M. Greco, ``Suboptimum Approach to Adaptive Coherent Radar Detection in Compound-Gaussian Clutter,''
{\em IEEE Transactions on Aerospace and Electronics Systems}, Vol. 35, No. 3, pp. 1095-1103, July 1999.
\bibitem{Yuri00}
Y. I. Abramovich and O. Besson, ``On the Expected Likelihood Approach for Assessment of Regularization Covariance Matrix,''
{\em IEEE Signal Processing Letters}, Vol. 22, No. 6, pp. 777-781, November 2014.
\bibitem{Yuri01}
Y. I. Abramovich and B. A. Johnson, ``GLRT-based Detection-Estimation for Undersampled Training Conditions,''  
{\em IEEE Transactions on Signal Processing}, Vol. 56, No. 8,  pp. 3600-3612, August 2008.
\bibitem{DeMaioFoglia00}
A. De Maio, A. Farina, and G. Foglia, ``Design and Experimental Validation of Knowledge-based Constant False Alarm Rate Detectors,''
{\em IET Radar, Sonar and Navigation}, Vol. 1, No. 4, pp. 308-316, August 2007.
\bibitem{DeMaioFoglia01}
A. De Maio, A. Farina, and G. Foglia, ``Knowledge-Aided Bayesian Radar Detectors \& Their Application to Live Data,''
{\em IEEE Transactions on Aerospace and Electronic Systems}, Vol. 46, No. 1, pp. 170-183, January 2010.
\bibitem{Aubry}
A. Aubry, V. Carotenuto, A. De Maio, and G. Foglia, ``Exploiting Multiple a Priori Spectral Models for Adaptive Radar Detection,''
{\em IET Radar, Sonar \& Navigation}, Vol.8, No.7, pp. 695-707, August 2014.
%
\bibitem{Hongbin}
P. Wang, Hongbin Li, and B. Himed, ``Knowledge-Aided Parametric Tests for Multichannel Adaptive Signal Detection,''
{\em IEEE Transactions on Signal Processing}, Vol. 59, No. 12, September 2011.
\bibitem{Hongbin2}
Hongbin Li and J. H. Michels, ``Parametric Adaptive Signal Detection for Hyperspectral Imaging,'' 
{\em IEEE Transactions on Signal Processing}, Vol. 54, No. 7, pp. 2704-2715, July 2006. 
\bibitem{Wiesel}
A. Wiesel, O. Bibi, and A. Globerson, ``Time Varying Autoregressive Moving Average Models for Covariance Estimation,'' 
{\em IEEE Transactions on Signal Processing}, Vol. 61, No. 11, pp. 2791-2801, June 2013. 
\bibitem{Nitzberg-1980} 
R. Nitzberg, ``Application of Maximum Likelihood Estimation of Persymmetric Covariance Matrices to
Adaptive Processing,'' {\em IEEE Transactions on Aerospace and Electronic Systems}, Vol. 16, No. 1, pp. 124-127, January 1980.
\bibitem{VanTrees4}
H. L. Van Trees, {\em Optimum Array Processing (Detection, Estimation, and Modulation Theory, Part IV)}, John Wiley \& Sons, 2002.
\bibitem{Cai1992}
L. Cai and H. Wang,  ``A Persymmetric Multiband GLR Algorithm,''
{\em IEEE Transactions on Aerospace and Electronic Systems}, Vol. 28, No. 3, pp. 806-816, July 1992.
\bibitem{hongbinPersymmetric}
P. Wang, Z. Sahinoglu, M. Pun, and Hongbin Li, ``Persymmetric Parametric Adaptive Matched Filter for Multichannel Adaptive Signal Detection,''
{\em IEEE Transactions on Signal Processing}, Vol. 60, No. 6, pp. 3322-3328, June 2012.
\bibitem{Hongbin1}
Hongbin Li, J. Li, and P. Stoica, ``Performance Analysis of Forward-Backward Matched-Filterbank Spectral Estimators,''
{\em IEEE Transactions on Signal Processing}, Vol. 46, No. 7, pp. 1954-1966, July 1998.
\bibitem{Pascal}
G. Pailloux, P. Forster, J. P. Ovarlez, and F. Pascal, ``Persymmetric Adaptive Radar Detectors,''
{IEEE Transactions on Aerospace and Electronic Systems}, Vol. 47, No. 4, pp. 2376-2390, October 2011.
\bibitem{Pascal2}
G. Ginolhac, P. Forster, F. Pascal, and J. P. Ovarlez, ``Exploiting Persymmetry for Low-rank Space Time Adaptive Processing,''
{\em Elsevier Signal Processing}, Vol. 97, pp. 242-251, April 2014.
\bibitem{Casillo-2007}
M. Casillo, A. De Maio, S. Iommelli, and L. Landi, ``A Persymmetric GLRT for Adaptive Detection in Partially-Homogeneous Environment,''
{\em IEEE Signal Processing Letters}, Vol. 14, No. 12, pp. 1016-1019, December 2007.
\bibitem{PersymmetricRao} 
C. Hao, D. Orlando, X. Ma, S. Yan, and C. Hou, ``Persymmetric Detectors with Enhanced Rejection Capabilities,''
{\em IET Radar, Sonar and Navigation}, Vol. 8, No. 5, pp. 557-563, June 2014.
\bibitem{GuerciPersymmetry}
S. U. Pillai, Y. L. Kim, and J. R. Guerci, ``Generalized Forward/Backward Subaperture Smoothing Techniques for Sample Starved STAP,''
{\em IEEE Transactions on Signal Processing}, Vol. 48, No. 12, pp. 3569-3574, December 2000.
\bibitem{Klemm}
R. Klemm, {\em Principles of Space-Time Adaptive Processing}, IEE Radar, Sonar, Navigation and Avionics, 12, 2002.
\bibitem{Billingsley00}
J.B. Billingsley, {\em Low-angle radar land clutter - Measurements and empirical models}, William Andrew Publishing ,Norwich, NY, 2002.
\bibitem{Billingsley01}
J. B. Billingsley, A. Farina, F. Gini, M. S. Greco, and L. Verrazzani, ``Statistical Analyses of Measured Radar Ground Clutter Data,''
{\em IEEE Transactions on Aerospace and Electronic Systems}, Vol. 35, No. 2, pp. 579-593, April 1999.
\bibitem{DeMaioFarina}
E. Conte, A. De Maio, and A. Farina, ``Statistical Tests for Higher Order Analysis of Radar Clutter: Their Application to L-band Measured Data,''
{\em IEEE Transactions on Aerospace and Electronic Systems}, Vol. 41, No. 1, pp. 205-218, January 2005.
\bibitem{BOR-Morgan} 
F. Bandiera, D. Orlando, and G. Ricci, {\em Advanced Radar Detection Schemes Under Mismatched Signal Models},
Synthesis Lectures on Signal Processing No. 8, Morgan \& Claypool Publishers, March 2009.
\bibitem{higherAlgebra}
L. Childs, {A Concrete Introduction to Higher Algebra}, Springer Science and Business Media, 2009.
\bibitem{Kay-SSP-DT}
S. M. Kay, {\em Fundamentals of Statistical Signal Processing, Detection Theory},
Englewood Cliffs, NJ: Prentice-Hall, 1998, vol. II.
\bibitem{DeMaio-RAO}
A. De Maio, ``Rao test for Adaptive Detection in Gaussian Interference with Unknown Covariance Matrix,''
{\em IEEE Transactions on Signal Processing}, Vol. 55, No. 7, pp. 3577-3584, July 2007.
\bibitem{DeMaio-WALD-AMF}
A. De Maio, ``A New Derivation of the Adaptive Matched Filter,''
{\em IEEE Signal Processing Letters}, Vol. 11, No. 10, pp. 792-793, October 2004.
\bibitem{DeMaio-Foglia}
A. De Maio, G. Foglia, E. Conte, and A. Farina, ``CFAR Behavior of Adaptive Detectors: an Experimental Analysis,''
{\em IEEE Transactions on Aerospace and Electronic Systems}, Vol. 41, No. 1, pp. 233-251, January 2005.
\bibitem{Richards}
M. A. Richards, J. A. Scheer, and W. A. Holm, {\em Principles of Modern Radar: Basic Principles}, SciTech Publishing, Incorporated, 2010.
\bibitem{Rudin}
W. Rudin, {\em Principles of Mathematical Analysis, Third Edition}, McGraw-Hill, Incorporated, 1964.
%
\end{thebibliography}
\end{document}